\title{Accurate MapReduce Algorithms for $k$-median and $k$-means in General Metric Spaces} 
\titlerunning{Accurate MapReduce Algorithms for $k$-median/$k$-means in General Metric Spaces}
\author{Alessio Mazzetto\footnote{This work was done while the author was a graduate student at University of Padova}}{Department of Computer Science, Brown University, Providence, USA}{alessio\_mazzetto@brown.edu}{}{}
\author{Andrea Pietracaprina}{Department of Information Engineering, University of Padova, Padova, Italy}{andrea.pietracaprina@unipd.it}{}{}
\author{Geppino Pucci}{Department of Information Engineering, University of Padova, Padova, Italy}{geppino.pucci@unipd.it}{}{}
\authorrunning{A. Mazzetto, A. Pietracaprina and G. Pucci}
\keywords{Clustering, $k$-median, $k$-means, MapReduce, Coreset}
\newcommand{\inst}{\mathcal{I}}
\newcommand{\opt}{\texttt{\rm opt}}
\DeclareMathOperator*{\argmin}{arg\,min}
\begin{document}

\maketitle

\begin{abstract}
Center-based clustering is a fundamental primitive for data analysis
and becomes very challenging for large datasets. In this paper, we
focus on the popular $k$-median and $k$-means variants which, given a
set $P$ of points from a metric space and a parameter $k<|P|$, require
to identify a set $S$ of $k$ centers minimizing, respectively, the sum of the
distances and of the squared distances of all points in
$P$ from their closest centers. Our specific focus is on general
metric spaces, for which it is reasonable to require that the centers belong to the
input set (i.e., $S \subseteq P$). We present coreset-based 3-round
distributed approximation algorithms for the above problems using the
MapReduce computational model. The algorithms are rather simple and
obliviously adapt to the intrinsic complexity of the dataset, captured by the
doubling dimension $D$ of the metric space. Remarkably, the algorithms attain
approximation ratios that can be made arbitrarily close to those
achievable by the best known polynomial-time sequential approximations,
and they are very space efficient for small $D$, requiring local
memory sizes substantially sublinear in the input size. To the best
of our knowledge, no previous distributed approaches were able to attain
similar quality-performance guarantees in general metric spaces.
\end{abstract}

\section{Introduction}
\label{sec:introduction}
Clustering is a fundamental primitive in the realms of
data management and machine learning, with applications in a
large spectrum of domains such as database search, bioinformatics,
pattern recognition, networking, operations research, and many more
\cite{HennigMMR15}. A prominent clustering subspecies 
is \emph{center-based clustering} whose goal is to
partition a set of data items into $k$ groups, where $k$ is an input
parameter, according to a notion of similarity, captured by a given
measure of closeness to suitably chosen representatives, called
centers. There is a vast and well-established literature on sequential
strategies for different instantiations of center-based clustering
\cite{AwasthiB15}. However, the explosive growth of data that needs to
be processed often rules out the use of these sequential strategies,
which are often impractical on large data sets, due to
their time and space requirements. Therefore, it is of paramount
importance to devise efficient distributed clustering strategies
tailored to the typical computational frameworks for big data
processing, such as MapReduce \cite{LeskovecRU14}.

In this paper, we focus on the \emph{$k$-median} and \emph{$k$-means}
clustering problems.  Given a set $P$ of points in a general metric
space and a positive integer $k \leq |P|$, the $k$-median (resp.,
$k$-means) problem requires to find a subset $S \subseteq P$ of $k$
points, called \emph{centers}, so that the sum of all distances
(resp., square distances) between the points of $P$ to their closest
center is minimized. Once $S$ is determined, the association of
each point to its closest center naturally defines a clustering of
$P$. While scarcely meaningful for general metric spaces,
for Euclidean spaces, the widely studied \emph{continuous} variant of these two
problems removes the constraint that $S$ is a subset of $P$, hence
allowing a much richer choice of centers from the entire space.  Along
with \emph{$k$-center}, which requires to minimize the maximum
distance of a point to its closest center, $k$-median and $k$-means
are the most popular instantiations of center-based clustering, whose
efficient solution in the realm of big data has attracted vast
attention in the recent literature 
\cite{EneIM11,BahmaniMVKV12,BalcanEL13,Song0H17,CeccarelloPP19}.
One of the reference models for big data computing, also adopted in
most of the aforementioned works, is MapReduce 
\cite{DeanG08,PietracaprinaPRSU12,LeskovecRU14}, where a set of
processors with limited-size local memories process data in a sequence
of parallel rounds. Efficient MapReduce algorithms should aim at
minimizing the number of rounds while using substantially sublinear
local memory.

A natural approach to solving large instances of combinatorial
optimization problems relies on the extraction of a much smaller
``summary'' of the input instance, often dubbed \emph{coreset} in the
literature \cite{Har-Peled2004}, which embodies sufficient information
to enable the extraction of a good approximate solution of the whole
input.  This approach is profitable whenever the (time and space)
resources needed to compute the coreset are considerably lower than
those required to compute a solution by working directly on the input
instance. Coresets with different properties have been studied in the
literature to solve different variants of the aforementioned
clustering problems \cite{Philips2016}.

The main contributions of this paper are novel coreset-based
space/round-efficient MapReduce algorithms for $k$-median and
$k$-means.

\subsection{Related work}

The $k$-median and $k$-means clustering problems in general metric
spaces have been extensively studied, and constant approximation
algorithms are known for both problems \cite{AwasthiB15}.
In recent years, there has been growing interest in the
development of distributed algorithms to attack these problems in
the big data scenario (see \cite{Song0H17} and references therein).
While straightforward parallelizations of known iterative sequential
strategies tend to be inefficient due to high round complexity, the
most relevant efforts to date rely on distributed constructions of
coresets of size much smaller than the input, upon which a sequential
algorithm is then run to obtain the final solution.  
Ene et al.\ \cite{EneIM11} present a randomized MapReduce algorithm
which computes a coreset for $k$-median of size $O(k^2 |P|^{\delta})$
in $O(1/\delta)$ rounds, for any $\delta \in (0,1)$. By using an
$\alpha$-approximation algorithm on this coreset, a weak
$(10\alpha+3)$-approximate solution is obtained. In the paper, the
authors claim that their approach extends also to the $k$-means
problem, but do not provide the analysis.  For this latter problem,
in \cite{BahmaniMVKV12}  a parallelization of the popular
$k$-means++ algorithm by \cite{ArthurV07} is presented, which builds an 
$O(k \log |P|$)-size coreset for $k$-means
in $O(\log |P|)$ rounds. By running an $\alpha$-approximation
algorithm on the coreset, the returned solution features 
an $O(\alpha)$ approximation ratio. 
A randomized MapReduce algorithm for $k$-median has been
recently presented in
\cite{Song0H17}, where the well known local-search PAM algorithm
\cite{Kaufmann1987} is employed to extract a small family of
possible solutions from random samples of the input. A suitable
refinement of the best solution in the family is then
returned. While extensive experiments support the
effectiveness of this approach in practice, no tight theoretical
analysis of the resulting approximation quality is provided.

In the continuous setting, Balcan et al.\ \cite{BalcanEL13} present
randomized 2-round algorithms to build coresets in $\mathbb{R}^d$ of
size $O\left(\frac{kd}{\epsilon^2}+Lk\right)$ for $k$-median, and
$O\left(\frac{kd}{\epsilon^4 }+Lk\log({Lk})\right)$ for $k$-means,
for any choice of $\epsilon \in (0,1)$,
where the computation is distributed among $L$ processing elements.
By using an $\alpha$-approximation algorithm on the coresets, the
overall approximation factor is $\alpha+O(\epsilon)$. 
For $k$-means, a recent improved construction
yields a coreset which is a factor $O(\epsilon^2)$ smaller and features
very fast distributed implementation  \cite{BachemLK18}. It is not
difficult to show that a straightforward adaptation of these
algorithms to general spaces (hence in a non-continuous setting) would
yield $(c \cdot \alpha+ O(\epsilon))$-approximations, with $c \geq 2$,
thus introducing a non-negligible gap with respect to the quality of
the best sequential
approximations.

Finally, it is worth mentioning that there is a rich literature 
on sequential coreset constructions for $k$-median and $k$-means,
which mostly focus on the continuous case in Euclidean spaces
\cite{Feldman2011,Har-Peled2004,Har-Peled2005,SohlerW18,CohenCK18}.
We do not review the results in these works since our focus is on
distributed algorithms in general metric spaces. We also note that the recent work of \cite{Huang2018} 
addresses the construction of coresets for $k$-median and $k$-means in general metric spaces, where the 
coreset sizes are expressed as a function of the doubling dimension. However, their construction strategy is 
rather complex and it is not clear how to adapt it to the distributed setting. 

\subsection{Our contribution}

We devise new distributed coreset constructions and show how to employ
them to yield accurate space-efficient 3-round MapReduce algorithms for
$k$-median and $k$-means.  Our
coresets are built in a \emph{composable} fashion \cite{IndykMMM14} in
the sense that they are obtained as the union of small local coresets
computed in parallel (in 2 MapReduce rounds) on distinct subsets
of a partition of the input.  The final solution is obtained by
running a sequential approximation algorithm on the coreset in the
third MapReduce round. The memory requirements of our
algorithms are analyzed in terms of the desired approximation
guarantee, and of the \emph{doubling dimension} $D$ of the underlying
metric space, a parameter which generalizes the dimensionality of
Euclidean spaces to general metric spaces and is thus related to the
increasing difficulty of spotting good clusterings as the parameter
$D$ grows. 

Let $\alpha$ denote the best approximation ratio attainable
by a sequential algorithm for either $k$-median or $k$-means on general metric
spaces. Our main results are 
3-round $(\alpha+O(\epsilon))$-approximation 
MapReduce algorithms for $k$-median and $k$-means, which require
$O(|P|^{2/3}k^{1/3}\cdot$ \\ $(c/\epsilon)^{2D} \log^2{|P|})$ local memory, where
$c>0$ is a suitable constant that will be specified in the analysis,
and $\epsilon \in (0,1)$ is a user-defined precision parameter.
To the best of our knowledge, these are the first MapReduce algorithms
for $k$-median and $k$-means in general metric spaces which feature
approximation guarantees that can be made arbitrarily close to those
of the best sequential algorithms, and run in few rounds using local
space substantially sublinear for low-dimensional spaces.
In fact, prior to our work existing MapReduce algorithms for
$k$-median and $k$-means in general metric spaces
either exhibited approximation factors
much larger than $\alpha$ \cite{EneIM11,BahmaniMVKV12},
or missed a tight theoretical analysis of the approximation factor 
\cite{Song0H17}.

Our algorithms revolve around novel coreset constructions somehow
inspired by those proposed in \cite{Har-Peled2004} for Euclidean
spaces. As a fundamental tool, the constructions make use of a
procedure that, starting from a set of points $P$ and a set of centers
$C$, produces a (not much) larger set $C'$ such that for any point $x
\in P$ its distance from $C'$ is significantly smaller than its
distance from $C$.  Simpler versions of our constructions can also be
employed to attain 2-round MapReduce algorithms for the continuous
versions of the two problems, featuring $\alpha+O(\epsilon)$
approximation ratios. While similar approximation guarantees have
already been achieved in the literature using more space-efficient but randomized 
coreset constructions \cite{BalcanEL13,BachemLK18},  this result provides evidence of the general
applicability of our novel approach.

Finally, we want to point out that a very desirable feature of our
MapReduce algorithms is that they do not require a priori knowledge of
the doubling dimension $D$ and, in fact, it is easily shown that they
adapt to the dimensionality of the dataset which, in principle, can be
much lower than the one of the underlying space. 

\vspace*{0.3cm}
\noindent
{\bf Organization of the paper.}  The rest of the paper is organized
as follows.  Section~\ref{sec:preliminaries} contains a number of
preliminary concepts, including various properties of coresets that
are needed to achieve our results.
Section~\ref{section:coresetconstruct} presents our novel coreset
constructions for $k$-median
(Subsection~\ref{subsection:coresetkmedian}) and $k$-means
(Subsection~\ref{subsection:coresetkmeans}). Based on these
constructions, Subsection~\ref{subsection:mapreducefinal} derives the
MapReduce algorithms for the two problems. Finally,
Section~\ref{sec:conclusions} offers some concluding remarks.


\section{Preliminaries}
\label{sec:preliminaries}
Let $\mathcal{M}$ be a metric space with distance function
$d(\cdot,\cdot)$. We define the \emph{ball of radius $r$ centered at
  $x$} as the set of points at distance at most $r$ from
$x$. The \emph{doubling dimension} of $\mathcal{M}$ is the smallest
integer $D$ such that for any $r$ and $x \in \mathcal{M}$, the ball of
radius $r$ centered at $x$ can be covered by at most $2^D$ balls of
radius $r/2$ centered at points of $\mathcal{M}$.  Let $x \in
\mathcal{M}$ and $Y \subseteq \mathcal{M}$. We define $d(x,Y) =
\min_{y \in Y}d(x,y)$ and $x^Y = \argmin_{y \in Y}d(x,y)$.  A set of
points $P \subseteq \mathcal{M}$ can be weighted by assigning a
positive integer $w(p)$ to each $p \in P$.
In this case, we will use the notation $P_w$ (note that an 
unweighted set of points can be considered weighted with unitary weights).
Let $X_w$ and $Y$ be two subsets of $\mathcal{M}$. We define
$\nu_{X_w}(Y) = \sum_{x \in X_w}w(x)d(x,Y)$ and $\mu_{X_w}(Y) =
\sum_{x \in X_w}w(x)d(x,Y)^2$. The values $\nu_{X_w}(Y)$ and $\mu_{X_w}(Y)$
are also referred to as \textit{costs}.

In the \emph{$k$-median problem} (resp., \emph{$k$-means problem}), we
are given in input an instance $\inst = (P,k)$, with $P \subseteq
\mathcal{M}$ and $k$ a positive integer. A set $S \subseteq P$ is a
solution of $\inst$ if $|S| \leq k$. The objective is to find the
solution $S$ with minimum cost $\nu_{P}(S)$ (resp., $\mu_{P}(S)$).
Given an instance $\mathcal{I}$ of one of these two problems, we
denote with $\opt_\inst$ its optimal solution. Moreover, for $\alpha
\geq 1$, we say that $S$ is an \emph{$\alpha$-approximate solution}
for $\mathcal{I}$ if its cost is within a factor $\alpha$ from the
cost of $\opt_\inst$. In this case, the value $\alpha$ is also called
approximation factor. An \emph{$\alpha$-approximation algorithm}
computes an $\alpha$-approximate solution for any input instance. The
two problems are immediately generalized to the case of weighted
instances $(P_w,k)$. In fact, all known approximations algorithms can
be straightforwardly adapted to handle weighted instances keeping the
same approximation quality.

Observe that the squared distance does not satisfy the triangle
inequality. During the analysis, we will use the following weaker
bound.
\begin{proposition}
\label{proposition:squareddistance}
Let $x,y,z \in \mathcal{M}$. For every $c>0$
we have that $d(x,y)^2 \leq (1+1/c)d(x,z)^2 + (1+c)d(z,y)^2$.
\end{proposition}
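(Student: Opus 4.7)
The plan is to reduce the statement to the ordinary triangle inequality combined with a one-parameter AM-GM-type bound on the cross term. Concretely, I would start by applying the triangle inequality in $\mathcal{M}$, which gives $d(x,y) \leq d(x,z) + d(z,y)$, and then square both sides to obtain
\[
d(x,y)^2 \leq d(x,z)^2 + 2\,d(x,z)\,d(z,y) + d(z,y)^2.
\]
The only non-trivial piece is then to split the cross term $2\,d(x,z)\,d(z,y)$ into pieces that can be absorbed into $d(x,z)^2$ and $d(z,y)^2$ with coefficients summing to the prescribed $1+1/c$ and $1+c$.

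The key lemma I would use is the elementary inequality: for any reals $a,b \geq 0$ and any $c>0$,
\[
2ab \;\leq\; \frac{a^2}{c} + c\,b^2.
\]
This follows immediately from expanding the non-negative square $\bigl(a/\sqrt{c} - \sqrt{c}\,b\bigr)^2 \geq 0$. Setting $a = d(x,z)$ and $b = d(z,y)$ and substituting back yields
\[
d(x,y)^2 \leq d(x,z)^2 + \frac{d(x,z)^2}{c} + c\,d(z,y)^2 + d(z,y)^2 = \Bigl(1+\frac{1}{c}\Bigr)d(x,z)^2 + (1+c)\,d(z,y)^2,
\]
which is exactly the claimed bound.

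There is no real obstacle here: the statement is a purely algebraic consequence of the triangle inequality and the weighted AM-GM inequality, and no property specific to the metric space (beyond the triangle inequality) is required. The only thing worth remarking is that equality in $2ab \le a^2/c + cb^2$ occurs precisely when $a = cb$, which highlights why the parameter $c$ is needed: a single fixed constant (e.g.\ $c=1$, yielding the standard $d(x,y)^2 \le 2d(x,z)^2 + 2d(z,y)^2$) would be too loose for the later analyses of $k$-means, where we will want to make one of the two coefficients arbitrarily close to $1$ at the cost of blowing up the other.
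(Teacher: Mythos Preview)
Your proof is correct and follows essentially the same approach as the paper: both derive $2ab \leq a^2/c + c\,b^2$ from the non-negativity of $(a/\sqrt{c} - \sqrt{c}\,b)^2$ and combine it with the triangle inequality $d(x,y) \leq d(x,z)+d(z,y)$ to conclude.
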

\begin{proof}
Let $a,b$ be two real numbers. Since $(a/\sqrt{c}-b\cdot\sqrt{c})^2 \geq 0$, we obtain that $2ab \leq a^2/c + c \cdot b^2 $. Hence, $(a+b)^2 \leq (1+1/c)a^2 + (1+c)b^2$. The proof follows since 
$d(x,y)^2 \leq \left[ d(x,z)+d(z,y)\right]^2$ by triangle inequality.
\end{proof}

A coreset is a small (weighted) subset of the input which summarizes
the whole data. The concept of summarization can be captured with the
following definition, which is commonly adopted to describe coresets
for $k$-means and $k$-median (e.g., \cite{Har-Peled2004,Feldman2011,Huang2018}).
\begin{definition}
\label{definition:strong} 
A weighted set of points $C_w$ is an $\epsilon$-approximate coreset of an
instance $\inst = (P,k)$ of $k$-median $($resp., $k$-means$)$
if for any solution $S$ of $\inst$ it holds
that $|\nu_P(S) - \nu_{C_w}(S)| \leq \epsilon \cdot \nu_P(S)$
$($resp., $|\mu_P(S) - \mu_{C_w}(S)| \leq \epsilon \cdot \mu_P(S)$$)$.
\end{definition}
Informally, the cost of any solution is approximately the same if
computed from the $\epsilon$-approximate coreset rather than from the
full set of points. In the paper we will also make use of the
following different notion of coreset (already used in
\cite{Har-Peled2004,EneIM11}), which upper bounds the aggregate
``proximity'' of the input points from the coreset as a function of
the optimal cost.
\begin{definition}
\label{definition:bounded} Let $\inst = (P,k)$ be an instance
of $k$-median $($resp., $k$-means$)$. 
A set of points $C_w$ is an $\epsilon$-bounded coreset of $\inst$ if it exists a map $\tau: P \rightarrow C_w$ such that 
$\sum_{x \in P}d(x,\tau(x)) \leq \epsilon \cdot \nu_P(\opt_\inst)$ 
$($resp., $\sum_{x \in P}d(x,\tau(x))^2 \leq \epsilon \cdot \mu_P(\opt_\inst)$$)$ and for any $x \in C_w$, $w(x) = |\{ y \in P: \tau(y) = x\}|$. 
We say that $C_w$ is weighted according to $\tau$.
\end{definition}
The above two kind of coresets are related, as shown in the following
two lemmas.
\begin{lemma}
\label{lemma:boundedtostrongkmedian}
Let $C_w$ be an $\epsilon$-bounded coreset of a $k$-median instance $\inst=(P,k)$. Then $C_w$ is also a $\epsilon$-approximate coreset of $\inst$.
\end{lemma}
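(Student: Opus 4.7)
The plan is a direct unpacking of the two definitions using the reverse triangle inequality. Fix an arbitrary solution $S$ of $\inst$. The goal is to bound $|\nu_P(S)-\nu_{C_w}(S)|$ by $\epsilon \cdot \nu_P(S)$.

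First I would rewrite $\nu_{C_w}(S)$ so that it becomes a sum indexed by $P$ via the map $\tau$. Since, by definition, $w(y) = |\tau^{-1}(y)|$ for each $y \in C_w$, one has
\[
\nu_{C_w}(S) \;=\; \sum_{y \in C_w} w(y)\, d(y,S) \;=\; \sum_{y \in C_w} \sum_{x \in \tau^{-1}(y)} d(\tau(x),S) \;=\; \sum_{x \in P} d(\tau(x),S).
\]
Combined with $\nu_P(S) = \sum_{x \in P} d(x,S)$, this gives a termwise comparison.

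Next I would apply the reverse triangle inequality pointwise:
\[
\bigl|\nu_P(S) - \nu_{C_w}(S)\bigr| \;\le\; \sum_{x \in P} \bigl|d(x,S) - d(\tau(x),S)\bigr| \;\le\; \sum_{x \in P} d(x,\tau(x)).
\]
Invoking the $\epsilon$-bounded coreset property yields $\sum_{x \in P} d(x,\tau(x)) \le \epsilon\, \nu_P(\opt_\inst)$. Finally, since $\opt_\inst$ is optimal, $\nu_P(\opt_\inst) \le \nu_P(S)$ for every feasible $S$, and the desired bound $|\nu_P(S)-\nu_{C_w}(S)| \le \epsilon\, \nu_P(S)$ follows.

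I do not expect any real obstacle here: the argument is essentially bookkeeping. The only subtle point worth double-checking is the re-indexing step, where one must be careful that $\tau$ is defined on all of $P$ and that the weights are exactly the preimage sizes, so that $\nu_{C_w}(S)$ indeed equals $\sum_{x \in P} d(\tau(x),S)$ with no leftover terms. Everything else is the triangle inequality and optimality of $\opt_\inst$.
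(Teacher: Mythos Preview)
Your proof is correct and essentially identical to the paper's: both rewrite $\nu_{C_w}(S)$ as $\sum_{x\in P} d(\tau(x),S)$ via the weight definition, apply the (reverse) triangle inequality termwise to get $|\nu_P(S)-\nu_{C_w}(S)|\le \sum_{x\in P} d(x,\tau(x))$, and then invoke the $\epsilon$-bounded property together with the optimality of $\opt_\inst$. The only cosmetic difference is that the paper states the two one-sided triangle inequalities separately rather than packaging them as a single absolute-value bound.
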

\begin{proof}
Let $\tau$ be the map of the definition of $\epsilon$-bounded coreset. Let $S$ be a solution of $\inst$. Using triangle inequality, we can easily see that $d(x,S) - d(x,\tau(x)) \leq d(\tau(x),S)$ and $d(\tau(x),S) \leq d(\tau(x),x) + d(x,S)$ for any $x \in P$. Summing over all points in $P$, we obtain that
\begin{align*}
    \nu_{P}(S) - \sum_{x \in P}d(x,\tau(x)) \leq \nu_{C_w}(S) \leq  \sum_{x \in P}d(x,\tau(x)) + \nu_{P}(S)
\end{align*}
To conclude the proof, we observe that $\sum_{x \in P}d(x,\tau(x)) \leq \epsilon \cdot \nu_P(\opt_\inst) \leq \epsilon \cdot \nu_P(S)$.
\end{proof}

\begin{lemma}
\label{lemma:boundedtostrongkmeans}
Let $C_w$ be an $\epsilon$-bounded coreset of a $k$-means instance $\inst=(P,k)$. Then $C_w$ is also a $(\epsilon+2\sqrt{\epsilon})$-approximate coreset of $\inst$.
\end{lemma}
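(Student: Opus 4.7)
The plan is to mimic the proof of Lemma \ref{lemma:boundedtostrongkmedian}, but replace the triangle inequality by the weaker bound in Proposition \ref{proposition:squareddistance} and pay the extra $2\sqrt{\epsilon}$ term by tuning its free parameter.

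Let $\tau\colon P\to C_w$ be the map of Definition \ref{definition:bounded}, so that $A:=\sum_{x\in P}d(x,\tau(x))^2\leq\epsilon\,\mu_P(\opt_\inst)\leq\epsilon\,\mu_P(S)$ for any solution $S$ of $\inst$. Since $C_w$ is weighted according to $\tau$, we may rewrite $\mu_{C_w}(S)=\sum_{x\in P}d(\tau(x),S)^2$, so the goal becomes the pointwise comparison between $d(x,S)^2$ and $d(\tau(x),S)^2$. The plan is to apply Proposition~\ref{proposition:squareddistance} twice, in both directions: taking $s^\star\in S$ closest to $x$ gives $d(\tau(x),S)^2\leq d(\tau(x),s^\star)^2\leq(1+1/c)d(x,\tau(x))^2+(1+c)d(x,S)^2$; taking $s'\in S$ closest to $\tau(x)$ gives $d(x,S)^2\leq d(x,s')^2\leq(1+1/c)d(x,\tau(x))^2+(1+c)d(\tau(x),S)^2$. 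Summing these two inequalities over $x\in P$ yields, with $a:=\mu_P(S)$ and $b:=\mu_{C_w}(S)$,
\begin{align*}
b &\leq (1+1/c)\,A + (1+c)\,a,\\
a &\leq (1+1/c)\,A + (1+c)\,b.
\end{align*}

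The main (and only) technical point is the choice of the parameter~$c$. Setting $c=\sqrt{\epsilon}$ and using $A\leq\epsilon\,a$, the first inequality gives directly $b-a\leq (\epsilon+\sqrt{\epsilon})\,a+\sqrt{\epsilon}\,a=(\epsilon+2\sqrt{\epsilon})\,a$. For the other direction, the second inequality becomes $(1-\epsilon-\sqrt{\epsilon})\,a\leq(1+\sqrt{\epsilon})\,b$, whence
\[
a-b\;\leq\;\left(1-\frac{1-\epsilon-\sqrt{\epsilon}}{1+\sqrt{\epsilon}}\right)a\;=\;\frac{\epsilon+2\sqrt{\epsilon}}{1+\sqrt{\epsilon}}\,a\;\leq\;(\epsilon+2\sqrt{\epsilon})\,a.
\]
Combining the two gives $|\mu_P(S)-\mu_{C_w}(S)|\leq(\epsilon+2\sqrt{\epsilon})\,\mu_P(S)$, which is exactly the $(\epsilon+2\sqrt{\epsilon})$-approximate coreset condition of Definition~\ref{definition:strong} for $k$-means. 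I do not expect any real obstacle beyond verifying that $c=\sqrt{\epsilon}$ is the right tuning to make the two terms $(1+1/c)\epsilon$ and $c$ balance to $2\sqrt{\epsilon}+\epsilon$.
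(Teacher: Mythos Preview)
Your argument is correct, but it is organized differently from the paper's proof. The paper does not invoke Proposition~\ref{proposition:squareddistance} on the squared distances; instead it factors $|d(x,S)^2-d(\tau(x),S)^2|=[d(x,S)+d(\tau(x),S)]\cdot|d(x,S)-d(\tau(x),S)|$, bounds each factor using the ordinary triangle inequality (getting $|d(x,S)-d(\tau(x),S)|\le d(x,\tau(x))$ and $d(x,S)+d(\tau(x),S)\le 2d(x,S)+d(x,\tau(x))$), and only then applies the AM--GM trick underlying Proposition~\ref{proposition:squareddistance}, with parameter $\sqrt{\epsilon}$, to the resulting cross term $2\sum_x d(x,\tau(x))\,d(x,S)$. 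You instead apply Proposition~\ref{proposition:squareddistance} as a black box, once in each direction, and tune $c=\sqrt{\epsilon}$ at the end. Both routes hinge on the same balancing of parameters and give the identical constant $\epsilon+2\sqrt{\epsilon}$. The paper's factorization handles both sides of the absolute value at once, whereas your approach treats the two directions separately and needs the small algebraic step with $(1-\epsilon-\sqrt{\epsilon})/(1+\sqrt{\epsilon})$; on the other hand, your version is more modular and makes the role of Proposition~\ref{proposition:squareddistance} explicit.
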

\begin{proof}
Let $\tau$ be the map of the definition of $\epsilon$-bounded coreset. Let $S$ be a solution of $\inst$. We want to bound the quantity $| \mu_{P}(S) - \mu_{C_w}(S) | = \sum_{x \in P} | d(x, S)^2 - d(\tau(x),S)^2 |$.
We rewrite $|d(x, S)^2 - d(\tau(x),S)^2|$ as $\left[ d(x,S) + d(\tau(x),S) \right] \cdot | d(x,S) - d(\tau(x),S)|$. 
By triangle inequality, we have that $d(x,S) \leq d(x,\tau(x)) + d(\tau(x),S)$ and $d(\tau(x),S) \leq d(\tau(x),x) + d(x,S)$. By combining these two inequalities, it results that $|d(x,S) - d(\tau(x), S)| \leq d(x,\tau(x))$. Moreover, $d(x,S) + d(\tau(x),S) \leq 2d(x,S) + d(x,\tau(x))$. Hence
\begin{eqnarray*}
| \mu_{P}(S) - \mu_{C_w}(S) |  & \leq & 
\sum_{x \in P} d(x, \tau(x))\left[ 2d(x,S) + d(x,\tau(x))\right] \\
& \leq & \epsilon \cdot \mu_{P}(S) + 2\sum_{x \in P}d(x,\tau(x))d(x,S)
\end{eqnarray*}
where we used the fact that $\sum_{x \in P}d(x,\tau(x))^2 \leq \epsilon \cdot \mu_{P}(\opt_\inst) \leq \epsilon \cdot \mu_{P}(S)$. We now want to bound the sum over the products of the two distances. Arguing as in the proof of \autoref{proposition:squareddistance}, we can write:
\begin{align*}
    2\sum_{x \in P}d(x,\tau(x))d(x,S) \leq  \sqrt{\epsilon} \cdot \sum_{x \in P}d(x,S)^2 + \frac{1}{\sqrt{\epsilon}} \sum_{x \in P}d(x,\tau(x))^2 \leq 2 \sqrt{\epsilon} \cdot \mu_{P}(S)
\end{align*}
To wrap it up, it results that $ | \mu_{P}(S) - \mu_{C_w}(S) | \leq (\epsilon + 2 \sqrt{\epsilon})\cdot \mu_{P}(S)$.
\end{proof}

In our work, we will build coresets by working in parallel over a partition of the input instance. The next lemma provides known results on the relations between the optimal solution of the whole input points and the optimal solution of a subset of the input points.
\begin{lemma}
\label{lemma:optimalsolrelation}
Let $C_w \subseteq P$. Let $\inst = (P,k)$ and $\inst' = (C_w,k)$. Then:
$($a$)$ $\nu_{C_w}(\opt_{\inst'}) \leq 2\nu_{C_w}(\opt_\inst)$;
and
$($b$)$ $\mu_{C_w}(\opt_{\inst'}) \leq 4\mu_{C_w}(\opt_\inst)$.
\end{lemma}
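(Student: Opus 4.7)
The plan is to exhibit, from the optimal solution $\opt_\inst \subseteq P$ of the full instance, an explicit feasible solution $S' \subseteq C_w$ of $\inst'$ whose cost on $C_w$ is at most twice (resp., four times) $\nu_{C_w}(\opt_\inst)$ (resp., $\mu_{C_w}(\opt_\inst)$). Since $\opt_{\inst'}$ is, by definition, the minimum-cost solution drawn from $C_w$, it can only do at least as well as $S'$, and both bounds follow.

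First I would construct $S'$ by projecting every center of $\opt_\inst$ onto $C_w$: for each $c \in \opt_\inst$, set $c^{C_w} = \argmin_{y \in C_w} d(c,y)$, and define $S' = \{c^{C_w} : c \in \opt_\inst\}$. Then $S' \subseteq C_w$ and $|S'| \le k$, so $S'$ is indeed a feasible solution of $\inst'$. For the weighted case the same projection is used, since the centers live in the metric space and weights only affect the cost aggregation.

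Next I would derive a pointwise bound on the distance from any $x \in C_w$ to $S'$. Let $c_x = x^{\opt_\inst}$ be the center of $\opt_\inst$ closest to $x$. The triangle inequality gives $d(x, c_x^{C_w}) \le d(x, c_x) + d(c_x, c_x^{C_w})$. The decisive observation is that $x$ itself is an element of $C_w$, so by definition of $c_x^{C_w}$ as the point of $C_w$ closest to $c_x$, we have $d(c_x, c_x^{C_w}) \le d(c_x, x)$. Combining these yields the key inequality $d(x, S') \le d(x, c_x^{C_w}) \le 2\, d(x, \opt_\inst)$ for every $x \in C_w$.

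Part (a) then follows at once by multiplying by $w(x)$ and summing over $x \in C_w$:
\[
\nu_{C_w}(\opt_{\inst'}) \;\le\; \nu_{C_w}(S') \;\le\; 2\, \nu_{C_w}(\opt_\inst).
\]
For part (b), squaring the pointwise inequality gives $d(x, S')^2 \le 4\, d(x, \opt_\inst)^2$, and weighted summation yields $\mu_{C_w}(\opt_{\inst'}) \le \mu_{C_w}(S') \le 4\, \mu_{C_w}(\opt_\inst)$. There is no real obstacle here: the whole argument is a clean application of the triangle inequality, pivoting on the membership $x \in C_w$ which certifies that projecting a center of $\opt_\inst$ onto $C_w$ cannot cost more than the distance from that center to $x$. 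The only subtle point worth flagging is that $\opt_\inst \subseteq P$ need not itself lie in $C_w$, which is exactly what forces the projection step and accounts for the extra factor of $2$ (resp., $4$).
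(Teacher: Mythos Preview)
Your proof is correct and follows essentially the same approach as the paper: project $\opt_\inst$ onto $C_w$ to obtain a feasible solution $S'$ (the paper calls it $X = \{x^{C_w} : x \in \opt_\inst\}$), then use the triangle inequality together with the observation that $x \in C_w$ forces $d(x^{\opt_\inst}, C_w) \le d(x^{\opt_\inst}, x) = d(x,\opt_\inst)$ to get the pointwise bound $d(x,S') \le 2\,d(x,\opt_\inst)$. Your write-up is in fact slightly more explicit than the paper's about why the key inequality $d(c_x, c_x^{C_w}) \le d(c_x, x)$ holds.
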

\begin{proof}
  We first prove point $(b)$. Let $X = \{ x^{C_w} : x \in \opt_\inst \}$. The set $X$ is a solution of $\inst'$. By optimality of $\opt_{\inst'}$, we have that $\mu_{C_w}(\opt_{\inst'}) \leq \mu_{C_w}(X)$. Also, by triangle inequality, it holds that  $\mu_{C_w}(X) \leq \sum_{x \in C_w}w(x)\left[ d(x,\opt_\inst) + d(x^{\opt_\inst}, X)\right]^2$. We observe that $d(x^{\opt_\inst}, X) \leq d(x, \opt_\inst)$ by definition of $X$. Thus, we obtain that $\mu_{C_w}(\opt_{\inst'}) \leq 4\mu_{C_w}(\opt_\inst)$. The proof of $(a)$ follows the same lines with a factor $2$ less since we do not square.
  \end{proof}

Bounded coresets have the
nice property to be \emph{composable}. That is, we can partition the
input points into different subsets and compute a bounded coreset
separately in each subset: the union of those coresets is a bounded
coreset of the input instance. This property, which is formally stated in the
following lemma, is crucial to develop efficient MapReduce algorithms
for the clustering problems.

\begin{lemma}
\label{lemma:unionbounded}
Let $\inst = (P,k)$ be an instance of
$k$-median $($resp., $k$-means$)$. Let $P_1,\ldots,P_L$ be a partition of $P$. For $\ell=1,\ldots,L$, let $C_{w,\ell}$ be an $\epsilon$-bounded coreset of $\inst_\ell = (P_\ell,k)$. Then $C_w = \cup_\ell C_{w,\ell}$ is a $2\epsilon$-bounded coreset $($resp., a $4\epsilon$-bounded coreset$)$ of $\inst$.
\end{lemma}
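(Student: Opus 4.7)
The plan is to build the required map $\tau:P\to C_w$ by concatenating the maps coming from the individual subinstances, and then to bound the aggregate distance by summing the local bounds and relating the local optimal costs to the global one via Lemma~\ref{lemma:optimalsolrelation}.

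First, for each $\ell$, let $\tau_\ell:P_\ell\to C_{w,\ell}$ be the map certifying that $C_{w,\ell}$ is an $\epsilon$-bounded coreset of $\inst_\ell$. Since $P_1,\dots,P_L$ is a partition of $P$, define $\tau:P\to C_w$ by $\tau(x)=\tau_\ell(x)$ whenever $x\in P_\ell$. I would then verify that $C_w$ is weighted according to $\tau$: each $x\in C_w$ receives weight $\sum_{\ell:\,x\in C_{w,\ell}}|\{y\in P_\ell:\tau_\ell(y)=x\}|$, which, because the $P_\ell$'s are disjoint, equals $|\{y\in P:\tau(y)=x\}|$ as required.

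Next I would estimate the total displacement. Using the fact that each $\tau_\ell$ is an $\epsilon$-bounded map for $\inst_\ell$, I would write, in the $k$-median case,
\[
\sum_{x\in P}d(x,\tau(x))=\sum_{\ell=1}^L\sum_{x\in P_\ell}d(x,\tau_\ell(x))\le\epsilon\sum_{\ell=1}^L\nu_{P_\ell}(\opt_{\inst_\ell}).
\]
The key step is to replace each local optimum by the restriction of the global one. By Lemma~\ref{lemma:optimalsolrelation}(a) applied with $C_w=P_\ell$ (unit weights), $\nu_{P_\ell}(\opt_{\inst_\ell})\le 2\,\nu_{P_\ell}(\opt_{\inst})$. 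Summing over $\ell$ and using that $\{P_\ell\}$ partitions $P$, so that $\sum_\ell\nu_{P_\ell}(\opt_{\inst})=\nu_P(\opt_{\inst})$, yields the desired $2\epsilon$-bound. The $k$-means case is entirely analogous, invoking instead Lemma~\ref{lemma:optimalsolrelation}(b) to pick up a factor of $4$.

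I do not foresee a major obstacle: the argument is essentially bookkeeping, with the only delicate point being the correct definition of the weight function on overlapping local coresets, which is handled by summing contributions across the $P_\ell$'s. The substantive ingredient is the uniform relation between $\opt_{\inst_\ell}$ and $\opt_{\inst}$ restricted to $P_\ell$, which is exactly what Lemma~\ref{lemma:optimalsolrelation} provides.
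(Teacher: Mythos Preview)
Your proposal is correct and follows essentially the same approach as the paper: define $\tau$ by gluing the $\tau_\ell$, sum the local displacement bounds, and apply Lemma~\ref{lemma:optimalsolrelation} to relate $\nu_{P_\ell}(\opt_{\inst_\ell})$ (resp., $\mu_{P_\ell}(\opt_{\inst_\ell})$) to $\nu_{P_\ell}(\opt_{\inst})$ (resp., $\mu_{P_\ell}(\opt_{\inst})$). Your explicit verification of the weight condition on $C_w$ is a detail the paper omits but is otherwise identical in spirit.
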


\begin{proof}
We prove the lemma for $k$-median. The proof for $k$-means is similar.
For $\ell=1,\ldots,L$, let $\tau_\ell$ be the map from $P_\ell$ to $C_{w,\ell}$ of \autoref{definition:bounded}. Now, for any $x \in P$, let $\ell$ be the integer such that $x \in P_\ell$; we define $\tau(x) = \tau_\ell(x)$.
\begin{align*}
    \sum_{x \in P}d(x,\tau(x)) \leq \sum_{\ell=1}^{L}\sum_{x \in P_\ell}d(x,\tau_\ell(x)) \leq \epsilon \sum_{\ell=1}^L \nu_{P_\ell}(\opt_{\inst_\ell}) \leq 2\epsilon \cdot \nu_{P}(\opt_{\inst})
\end{align*}
In the last inequality, we used the fact that $\nu_{P_\ell}(\opt_{\inst_\ell}) \leq  2 \nu_{P_\ell}(\opt_{\inst})$ from \autoref{lemma:optimalsolrelation}.
\end{proof}

In the paper, we will need the following additional characterization
of a representative subset of the input, originally introduced in
\cite{Har-Peled2004}.
\begin{definition}
\label{definition:centroidset}
Let $\inst = (P,k)$ be an instance of $k$-median $($resp.,
$k$-means$)$.  A set $C$ is said to be an $\epsilon$-centroid set of
$\inst$ if there exists a subset $X \subseteq C$, $|X| \leq k$, such
that $\nu_P(X) \leq (1+\epsilon)\nu_P(\opt_\inst)$
$($resp., $\mu_P(X) \leq (1+\epsilon)\mu_P(\opt_\inst)$$)$.
\end{definition}
\noindent 
Our algorithms are designed for the \emph{MapReduce}
model of computation which has become a
de facto standard for big data algorithmics  in recent years.
A MapReduce
algorithm~\cite{DeanG08,PietracaprinaPRSU12,LeskovecRU14} executes in
a sequence of parallel \emph{rounds}. In a round, a multiset $X$ of
key-value pairs is first transformed into a new multiset $X'$ of
key-value pairs by applying a given \emph{map function} (simply called
\emph{mapper}) to each individual pair, and then into a final multiset
$Y$ of pairs by applying a given \emph{reduce function} (simply called
\emph{reducer}) independently to each subset of pairs of $X'$ having
the same key.  The model features two parameters, $M_L$, the
\emph{local memory} available to each mapper/reducer, and $M_A$, the
\emph{aggregate memory} across all mappers/reducers. 

\section{Coresets construction in MapReduce}
\label{section:coresetconstruct}
\sloppy
Our coreset constructions are based on a suitable point selection
algorithm called $\texttt{CoverWithBalls}$, somewhat inspired by the exponential
grid construction used in \cite{Har-Peled2004} to build
$\epsilon$-approximate coresets in $\mathbb{R}^d$ for the continuous
case. Suppose that we want to build an $\epsilon$-bounded coreset of a
$k$-median instance $\inst = (P,k)$ and that a $\beta$-approximate
solution $T$ for $\inst$ is available. A simple approach would be to
find a set $C_w$ such that for any $x$ in $P$ there exists a point
$\tau(x) \in C$ for which $d(x,\tau(x)) \leq (\epsilon/2\beta) \cdot
d(x,T)$. Indeed, if $C_w$ is weighted according to $\tau$, it can be seen
that $C_w$ is an $\epsilon$-bounded coreset of
$\inst$. The set $C_w$ can be constructed greedily by iteratively
selecting an arbitrary point $p \in P$, adding it to $C_w$, and
discarding all points $q \in P$ (including $p$) 
for which the aforementioned property
holds with $\tau(q) = p$. The construction ends when all points of $P$
are discarded. However, note that the points of $P$ which are already very
close to $T$, say at a distance $\leq R$ for a suitable tolerance
threshold $R$, do not contribute much to $\nu_{P}(T)$, and so to the sum
$\sum_{x \in P}d(x,\tau(x))$. For these points, we can relax the
constraint and discard them from $P$ as soon their distance to
$C_w$ becomes at most $(\epsilon/2\beta) \cdot R$. This relaxation is
crucial to bound the size of the returned set as a function of the
doubling dimension of the space. 
\begin{algorithm}
    \DontPrintSemicolon
    $C_w \leftarrow \emptyset$ \;
    \While{$P \neq \emptyset$}{
        $p \longleftarrow $ arbitrarily selected point in $P$ \;
        $C_w \longleftarrow C_w \cup \{ p \}, w(p) \longleftarrow 0$ \;
        \ForEach{$q \in P$}{
            \If{ $d(p,q) \leq \epsilon/(2\beta) \max \{R,d(q,T) \} $}{
                remove $q$ from $P$ \;
                $w(p) \longleftarrow w(p)+1$ \hspace{10pt} \tcc*[r]{(i.e. $\tau(q) = p $, see \autoref{lemma:taucoverwithballs})}
            }         
        }
    }
    return $C_w$
    \caption{\texttt{CoverWithBalls}$(P,T,R,\epsilon,\beta)$}
\end{algorithm}
Algorithm $\texttt{CoverWithBalls}$ 
is formally described in the
pseudocode below. It receives in input 
two sets of points, $P$ and $T$, and three positive
real parameters $R$, $\epsilon$, and $\beta$, with $\epsilon < 1$ and
$\beta \geq 1$  and  outputs a weighted set $C_w \subseteq
P$ which satisfies the property stated in the following lemma.
\begin{lemma}
\label{lemma:taucoverwithballs}
Let $C_w$ be the output of $\texttt{CoverWithBalls}(P,T,R,\epsilon,\beta)$. 
$C_w$ is weighted according to  a map $\tau: P \rightarrow C_w$ such that, for any $x \in P$, $d(x,\tau(x)) \leq \epsilon/(2\beta)\max\{R,d(x,T)\}$.
\end{lemma}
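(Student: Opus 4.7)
The plan is to read off the map $\tau$ directly from the execution of the algorithm and then verify the three required properties: (i) $\tau$ is well-defined on all of $P$, (ii) $\tau(x) \in C_w$ with $d(x,\tau(x)) \leq (\epsilon/(2\beta))\max\{R,d(x,T)\}$, and (iii) the final weights satisfy $w(p) = |\{y \in P : \tau(y) = p\}|$.

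First, I would fix notation for the execution: let $p_1, p_2, \ldots, p_m$ be the points selected in successive iterations of the outer \textbf{while}-loop (so $C_w = \{p_1,\ldots,p_m\}$ as a set), and for each iteration $i$ let $Q_i \subseteq P$ be the set of points removed during that iteration. Note that $p_i \in Q_i$, since the removal test at $q = p_i$ reads $d(p_i,p_i) = 0 \leq (\epsilon/(2\beta))\max\{R,d(p_i,T)\}$, so each iteration removes at least one point and the algorithm terminates after $m \leq |P|$ iterations; the sets $Q_1,\ldots,Q_m$ form a partition of the original $P$. I then define $\tau: P \to C_w$ by $\tau(q) = p_i$ whenever $q \in Q_i$.

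Property (ii) is then immediate: by the removal condition in the inner \textbf{foreach}-loop, any $q \in Q_i$ satisfies $d(p_i,q) \leq (\epsilon/(2\beta))\max\{R,d(q,T)\}$, which is exactly $d(q,\tau(q)) \leq (\epsilon/(2\beta))\max\{R,d(q,T)\}$. For property (iii), observe that during iteration $i$ the algorithm sets $w(p_i) \leftarrow 0$ and then performs exactly one increment $w(p_i) \leftarrow w(p_i)+1$ for every $q$ removed in that iteration, i.e.\ for every $q \in Q_i$. Hence at the end of iteration $i$ we have $w(p_i) = |Q_i| = |\{q \in P : \tau(q) = p_i\}|$, and $w(p_i)$ is never touched again in subsequent iterations since $p_i \notin P$ after being removed in iteration $i$. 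This matches Definition~\ref{definition:bounded}'s requirement on the weighting, completing the argument.

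There is essentially no obstacle here; the lemma is a bookkeeping statement about the pseudocode. The only point that needs a brief justification is that each iteration actually makes progress (so $\tau$ is defined on all of $P$), which I handled above via the trivial observation $q = p_i$ satisfies the removal test. Everything else follows by inspection of the inner \textbf{if}-condition and the weight updates.
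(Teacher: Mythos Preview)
Your proof is correct and follows essentially the same approach as the paper: define $\tau(x)$ to be the center that caused the removal of $x$ during the execution, and read off the distance bound from the \textbf{if}-condition. You simply spell out in more detail (termination, the partition $Q_1,\ldots,Q_m$, and the weight bookkeeping) what the paper condenses into ``the statement immediately follows.''
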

\begin{proof}
For any $x \in P$, we define $\tau(x)$ as the point in $C_w$ which caused the removal of $x$ from $P$ during the execution of the algorithm. The statement immediately follows.
\end{proof}
While in principle the size of $C_w$ can be arbitrarily close to
$|P|$, the next theorem shows that this is not the case for low
dimensional spaces, as a consequence of the fact that there cannot be
too many points which are all far from one another.  We first
need a technical lemma. A set of points $X$ is said to be an
\emph{$r$-clique} if for any $x,y \in X$, $x \neq y$, it holds that
$d(x,y) > r$. We have:
\begin{lemma}
\label{lemma:sizeclique}
Let $0 < \epsilon < 1$. Let $\mathcal{M}$ be a metric space with doubling dimension $D$. Let $X \subseteq \mathcal{M}$ be an $\epsilon \cdot r$-clique and assume that $X$ can be covered by a ball of radius $r$ centered at a point of $\mathcal{M}$. Then, $|X| \leq (4/\epsilon)^D$. 
\end{lemma}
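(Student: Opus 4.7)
The plan is to iterate the doubling dimension property starting from the ball of radius $r$ that covers $X$, halving the radius at each step and multiplying the number of balls by at most $2^D$. After $i$ iterations of this process, $X$ is covered by at most $2^{Di}$ balls of radius $r/2^i$ centered at points of $\mathcal{M}$.

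I would then pick $i$ to be the smallest positive integer such that $2^i > 2/\epsilon$. This choice guarantees $r/2^i < \epsilon r /2$, so that any ball of radius $r/2^i$ has diameter strictly less than $\epsilon r$. Since $X$ is an $\epsilon r$-clique, any two distinct points of $X$ have distance $> \epsilon r$, hence no ball of radius $r/2^i$ can contain more than one point of $X$. Consequently, $|X|$ is at most the number of balls, i.e., $|X| \leq 2^{Di} = (2^i)^D$.

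Finally, I would observe that the minimality of $i$ forces $2^{i-1} \leq 2/\epsilon$, hence $2^i \leq 4/\epsilon$, which yields the desired bound $|X| \leq (4/\epsilon)^D$.

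There is no real obstacle here: the result is a textbook consequence of the doubling dimension, and the only care needed is in choosing $i$ so that the diameter of the final balls is strictly less than the clique separation $\epsilon r$, while ensuring $2^i \leq 4/\epsilon$ so the bound comes out cleanly.
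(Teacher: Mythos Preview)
Your proof is correct and follows essentially the same argument as the paper: iterate the doubling property to cover $X$ by $2^{iD}$ balls of radius $r/2^i$, choose $i$ so that the diameter of these balls does not exceed $\epsilon r$ (hence each ball contains at most one point of the $\epsilon r$-clique), and bound $2^i \leq 4/\epsilon$ via the minimality of $i$. The only cosmetic difference is that you require $r/2^i < \epsilon r/2$ strictly while the paper uses $r/2^i \leq \epsilon r/2$; since the clique separation is strict ($d(x,y) > \epsilon r$), both choices work and yield the same bound.
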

\begin{proof}
By recursively applying the definition of doubling dimension, we observe that the ball of radius $r$ which covers $X$ can be covered by $2^{j\cdot D}$ balls of radius $2^{-j}\cdot r$, where $j$ is any non negative integer. Let $i$ be the least integer for which $2^{-i}\cdot r \leq \epsilon/2 \cdot r$ holds. Any of the $2^{i \cdot D}$ balls with radius $2^{-i}\cdot r$ can contain at most one point of $X$, since $X$ is a $\epsilon\cdot r$-clique. Thus $|X| \leq 2^{i \cdot D}$.
As $i = 1 + \lceil \log_2{(1/\epsilon)} \rceil$, we finally obtain that $|X| \leq (4/\epsilon)^D$.
\end{proof}
\begin{theorem}
\label{theorem:sizecoverwithballs}
Let $C_w$ be the set returned by the execution of
$\texttt{CoverWithBalls}(P,T,R,\epsilon,\beta)$. Suppose that the
points in $P$ and $T$ belong to a metric space with doubling dimension
$D$. Let $c$ be a real value such that, for any $x \in P$, $c \cdot R
\geq d(x,T)$. Then,
\begin{align*}
    |C_w| \leq |T| \cdot \left(16\beta/\epsilon\right)^D \cdot (\log_2{c} + 2)
\end{align*}
\end{theorem}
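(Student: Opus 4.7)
The plan is to exploit the fact that the greedy selection produces a set $C_w$ whose elements are pairwise well-separated, and then bucket these elements by their nearest point in $T$ and their distance scale to $T$, so that each bucket becomes a clique inside a ball of bounded radius and \autoref{lemma:sizeclique} applies.

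First I would establish the separation property. If $p, p' \in C_w$ are two selected points and $p$ was selected before $p'$, then at the moment $p$ was added to $C_w$, the point $p'$ was still in $P$ and was not removed by $p$. By the \texttt{if}-condition in the inner loop, this means $d(p,p') > (\epsilon/(2\beta)) \max\{R, d(p',T)\}$. This is the only ``combinatorial'' fact I will need from the algorithm.

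Next I would partition $C_w$ as follows. For each $t \in T$, let $C_w^{(t)}$ be the points $p \in C_w$ whose closest point in $T$ is $t$ (ties broken arbitrarily). Then subdivide $C_w^{(t)}$ into ``rings'' according to distance to $T$: let $C_{t,0}$ contain those $p$ with $d(p,T) \leq R$, and for $j \geq 1$ let $C_{t,j}$ contain those $p$ with $2^{j-1} R < d(p,T) \leq 2^j R$. Because $d(x,T) \leq cR$ for every $x \in P$, only indices $j$ with $2^{j-1} R < cR$ can be nonempty, so $j$ ranges over at most $\lceil \log_2 c \rceil + 1 \leq \log_2 c + 2$ values.

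I would then bound $|C_{t,j}|$ via \autoref{lemma:sizeclique}. Every point of $C_{t,j}$ lies within distance $\max\{R, 2^j R\}$ of $t$, so $C_{t,j}$ is contained in a ball of radius $r_j$ centered at $t$, with $r_0 = R$ and $r_j = 2^j R$ for $j \geq 1$. By the separation property, any two points of $C_{t,j}$ are at distance strictly greater than $(\epsilon/(2\beta)) \max\{R, 2^{j-1} R\}$, which equals $(\epsilon/(2\beta)) R$ for $j=0$ and $(\epsilon/(2\beta)) 2^{j-1} R$ for $j \geq 1$. In both cases the ratio between this separation and $r_j$ is at least $\epsilon/(4\beta)$, so $C_{t,j}$ is an $\epsilon' r_j$-clique inside a ball of radius $r_j$ with $\epsilon' = \epsilon/(4\beta)$, giving $|C_{t,j}| \leq (4/\epsilon')^D = (16\beta/\epsilon)^D$.

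Summing over all $t \in T$ and all at most $\log_2 c + 2$ values of $j$ yields the claimed bound. The only subtle point is getting the right constant from the ring subdivision, in particular making sure the $j=0$ ring (where the $\max$ in the separation condition is dominated by $R$, not by $d(p',T)$) is handled on the same footing as the $j \geq 1$ rings; the choice of the $R$-threshold in \texttt{CoverWithBalls} is precisely what makes this clean.
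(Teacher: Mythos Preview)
Your proof is correct and follows essentially the same approach as the paper: partition $C_w$ by nearest center in $T$ and by dyadic distance scales, establish the pairwise separation from the greedy removal condition, and apply \autoref{lemma:sizeclique} to each bucket. The only difference is cosmetic---your ring index $j$ is shifted by one relative to the paper's (your $C_{t,j}$ corresponds to the paper's $D_{i,j-1}$), and you uniformly use the weaker ratio $\epsilon/(4\beta)$ for all buckets whereas the paper uses the slightly tighter $\epsilon/(2\beta)$ on the innermost ball before absorbing it into the final bound.
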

\begin{proof}
Let $T = \{t_1,\ldots,t_{|T|} \}$ be the set in input to the algorithm. For any $i$, $1 \leq i \leq |T|$, let $P_i = \{ x \in P: x^T = t_i \}$
and  $B_{i} = \{x \in P_i: d(x,T) \leq R \}$. In addition, for any integer value $j \geq 0$ and for any feasible value of $i$, we define $D_{i,j} = \{ x \in P_i: 2^{j}\cdot R < d(x,T) \leq 2^{j+1}\cdot R \}$. We observe that for any $j \geq \lceil \log_2{c} \rceil$, the sets $D_{i,j}$ are empty, since $d(x,T) \leq c \cdot R$. Together, the sets $B_{i}$ and $D_{i,j}$ are a partition of $P_i$. 

For any $i$, let $C_{i} = C_w \cap B_{i}$. We now want to show that the set $C_{i}$ is a $\epsilon/(2\beta)\cdot R$-clique. Let $c_1,c_2$ be any two different points in $C_{i}$ and suppose, without loss of generality, that $c_1$ was added first to $C_w$. Since $c_2$ was not removed from $P$, this means that $d(c_1,c_2) > \epsilon/(2\beta)\cdot \max \{ d(c_2,T), R \} \geq \epsilon/(2\beta)R$, where we used the fact that $d(c_2,T) \leq R$ since $c_2$ belongs to $B_i$. Also, the set $C_i \subseteq B_{i}$ is contained in a ball of radius $R$ centered in $t_i$, thus we can apply \autoref{lemma:sizeclique} and bound its size, obtaining that $|C_i| \leq (8\beta/\epsilon)^D$.

For any $i$ and $j$, let $C_{i,j} = C_w \cap D_{i,j}$. We can use a
similar strategy to bound the size of those sets. We first show that
the sets $C_{i,j}$ are $\frac{\epsilon}{4\beta}\cdot
2^{j+1}R$-cliques. Let $c_1,c_2$ be any two different points in
$C_{i,j}$ and suppose, without loss of generality, that $c_1$ was
added first to $C_w$. Since $c_2$ was not removed from $P$, this means
that $d(c_1,c_2) > \epsilon/(2\beta)\cdot \max \{ d(c_2,T), R \} \geq
\epsilon/(4\beta)2^{j+1}R$, where we used the fact that $d(c_2,T) >
2^j \cdot R$ since $c_2$ belongs to $D_{i,j}$. Also, the set $C_{i,j}
\subseteq D_{i,j}$ is contained in a ball of radius $2^{j+1}R$
centered in $t_i$, thus we can apply \autoref{lemma:sizeclique} and
obtain that $|C_{i,j}| \leq (16\beta/\epsilon)^D$. Since the sets
$C_i$ and $C_{i,j}$ partition $C_w$, we can bound the size of $C_w$ as
the sum of the bounds of the size of those sets. Hence:
\begin{align*}
    |C_w| \leq \sum_{i=1}^{|T|} |C_{i}| + \sum_{i=1}^{|T|} \sum_{j=0}^{\lceil \log_2{c} \rceil - 1}|C_{i,j}| \leq |T|\cdot (16\beta/\epsilon)^D \cdot (\log_2{c}+2)
\end{align*}
\end{proof}


\subsection{A first approach to coreset construction for $k$-median}
\label{subsection:approachkmedian}
In this subsection we present a $1$-round MapReduce algorithm that
builds a weighted coreset $C_w \subseteq P$ of a $k$-median instance $\inst =
(P,k)$. The algorithm is parametrized by a value $\epsilon \in (0,1)$, 
which represents a tradeoff between coreset size and
accuracy. The returned coreset has the following property. Let $\inst'
= (C_w,k)$. If we run an $\alpha$-approximation algorithm on 
$\inst'$, then the
returned solution is a $(2\alpha+O(\epsilon))$-approximate
solution of $\inst$. 
Building on this construction, in the next subsection
we will obtain a better coreset which allows us to reduce the
final approximation factor to the desired $\alpha+O(\epsilon)$ value.
The coreset construction algorithm operates as follows. The
set $P$ is partitioned into $L$ equally-sized subsets
$P_1,\ldots,P_L$. In parallel, on each $k$-median instance
$\inst_\ell = (P_\ell,k)$, with $\ell=1,\ldots,L$, the
following operations are performed:
\begin{enumerate}
\item Compute a set $T_\ell$ of $m \geq k$ points 
such that $\nu_{P_\ell}(T_\ell) \leq \beta \cdot \nu_{P_\ell}(\opt_{\inst_\ell})$.
\item $R_\ell \longleftarrow  \nu_{P_\ell}(T_\ell)/|P_\ell|$.
\item $C_{w,\ell} \longleftarrow \texttt{CoverWithBalls}(P_\ell,T_\ell,R_\ell,\epsilon,\beta)$.
\end{enumerate}
The set $C_w = \cup_{\ell=1}^{L}C_{w,\ell}$ is the output of the
algorithm. 

In Step 1, the set $T_\ell$ can be computed through a sequential
(possibly bi-criteria) approximation algorithm for $m$-median, with a
suitable $m \geq k$, to yield a small value of $\beta$.  If we assume
that such an algorithm requires space linear in $P_\ell$, the entire
coreset costruction can be implemented in a single MapReduce round,
using $O(|P|/L)$ local memory and $O(|P|)$ aggregate memory. For
example, using one of the known linear-space, constant-approximation
algorithms (e.g., \cite{AryaGKMMP04}), we can get $\beta = O(1)$ with
$m=k$.

\begin{lemma}
\label{lemma:cwboundedcoreset}
For $\ell=1,\ldots,L$, $C_{w,\ell}$ is an $\epsilon$-bounded coreset 
of the $k$-median instance $\inst_\ell$.
\end{lemma}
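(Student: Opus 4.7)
The plan is to unwind the two relevant definitions and then combine the per-point bound from \autoref{lemma:taucoverwithballs} with the definition $R_\ell = \nu_{P_\ell}(T_\ell)/|P_\ell|$ and the $\beta$-approximation guarantee on $T_\ell$.

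First I would invoke \autoref{lemma:taucoverwithballs} applied to $\texttt{CoverWithBalls}(P_\ell,T_\ell,R_\ell,\epsilon,\beta)$ to obtain a map $\tau: P_\ell \rightarrow C_{w,\ell}$ with
\[
    d(x,\tau(x)) \;\leq\; \frac{\epsilon}{2\beta}\,\max\{R_\ell, d(x,T_\ell)\} \qquad \forall x \in P_\ell,
\]
and with $C_{w,\ell}$ weighted according to $\tau$, which matches exactly the weighting condition in \autoref{definition:bounded}. So the only thing left to verify is the aggregate inequality $\sum_{x\in P_\ell} d(x,\tau(x)) \leq \epsilon\cdot \nu_{P_\ell}(\opt_{\inst_\ell})$.

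Next I would bound the max by a sum: $\max\{R_\ell, d(x,T_\ell)\} \leq R_\ell + d(x,T_\ell)$. Summing over $x \in P_\ell$ yields
\[
    \sum_{x \in P_\ell} d(x,\tau(x)) \;\leq\; \frac{\epsilon}{2\beta}\left(|P_\ell|\cdot R_\ell + \nu_{P_\ell}(T_\ell)\right).
\]
By the choice $R_\ell = \nu_{P_\ell}(T_\ell)/|P_\ell|$ in Step 2, we have $|P_\ell|\cdot R_\ell = \nu_{P_\ell}(T_\ell)$, so the right-hand side collapses to $(\epsilon/\beta)\,\nu_{P_\ell}(T_\ell)$. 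Finally, the Step 1 guarantee $\nu_{P_\ell}(T_\ell) \leq \beta \cdot \nu_{P_\ell}(\opt_{\inst_\ell})$ cancels the $\beta$ in the denominator and leaves exactly $\epsilon \cdot \nu_{P_\ell}(\opt_{\inst_\ell})$, establishing the bounded-coreset condition.

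There is really no hard step here: the proof is essentially bookkeeping that ties together the three design choices of the coreset construction (the $\beta$-approximation, the averaging threshold $R_\ell$, and the $\epsilon/(2\beta)$ shrink factor inside $\texttt{CoverWithBalls}$). The only point to be slightly careful about is that the two terms from splitting the $\max$ are each individually bounded by $(\epsilon/2)\,\nu_{P_\ell}(\opt_{\inst_\ell})$ precisely because $R_\ell$ was defined as the average cost under $T_\ell$; any other threshold would break the symmetry. Once that observation is in place, the lemma follows in a couple of lines.
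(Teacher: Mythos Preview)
Your proof is correct and follows essentially the same approach as the paper: invoke \autoref{lemma:taucoverwithballs}, bound the $\max$ by the sum, and then use $|P_\ell|\cdot R_\ell = \nu_{P_\ell}(T_\ell) \leq \beta\cdot \nu_{P_\ell}(\opt_{\inst_\ell})$ to collapse the two terms. The paper's version is simply more terse, compressing the argument into a single chain of inequalities.
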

\begin{proof}
Fix a value of $\ell$. Let $\tau_\ell$ be the map between the points in $C_{w,\ell}$ and the points in $P_\ell$ of \autoref{lemma:taucoverwithballs}. The set $C_{w,\ell}$ is weighted according to $\tau_\ell$. Also, it holds that:
\begin{align*}
    \sum_{x \in P_\ell}d(x,\tau_\ell(x)) \leq \frac{\epsilon}{2\beta} \sum_{x \in P_\ell}\left(R_\ell+d(x,T_\ell) \right) \leq  \frac{\epsilon}{2\beta}\left( R_\ell\cdot|P_\ell|+\nu_{P_\ell}(T_\ell) \right) \leq \epsilon \cdot \nu_{P_\ell}(\opt_{\inst_\ell})
\end{align*}
\end{proof}
By combining \autoref{lemma:cwboundedcoreset} and \autoref{lemma:unionbounded}, the next lemma immediately follows.
\begin{lemma}
\label{lemma:coresetkmedian}
Let $\inst = (P,k)$ be a $k$-median instance. The set $C_w$ returned by the above MapReduce algorithm is a $2\epsilon$-bounded coreset of $\inst$.
\end{lemma}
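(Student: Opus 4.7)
The plan is to derive this lemma as a direct composition of the two immediately preceding results, exactly as the author suggests. The algorithm partitions $P$ into $L$ equally-sized subsets $P_1, \ldots, P_L$ and builds each $C_{w,\ell}$ by invoking \texttt{CoverWithBalls} on $\inst_\ell = (P_\ell, k)$ with parameters $T_\ell$, $R_\ell$, $\epsilon$, and $\beta$, so we are in precisely the setting where the two component lemmas apply.

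First I would invoke \autoref{lemma:cwboundedcoreset} for each $\ell \in \{1,\ldots,L\}$: this gives that every $C_{w,\ell}$ is an $\epsilon$-bounded coreset of the subinstance $\inst_\ell$. Since the $P_\ell$'s form a partition of $P$ by construction, the hypotheses of \autoref{lemma:unionbounded} are met. Applying the $k$-median part of that lemma (the one with multiplicative factor $2$) to the family $\{C_{w,\ell}\}_{\ell=1}^L$ then yields that $C_w = \bigcup_{\ell=1}^L C_{w,\ell}$ is a $2\epsilon$-bounded coreset of $\inst$, which is exactly the claim.

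There is essentially no technical obstacle here: the work has been front-loaded into \autoref{lemma:cwboundedcoreset} (where the choice $R_\ell = \nu_{P_\ell}(T_\ell)/|P_\ell|$ is crucial, since it ensures $R_\ell \cdot |P_\ell| + \nu_{P_\ell}(T_\ell) = 2\nu_{P_\ell}(T_\ell) \leq 2\beta \cdot \nu_{P_\ell}(\opt_{\inst_\ell})$, which is what absorbs the factor $\beta$ inside the $\epsilon/(2\beta)$ bound produced by \texttt{CoverWithBalls}) and into \autoref{lemma:unionbounded} (where the factor $2$ comes from \autoref{lemma:optimalsolrelation}(a), bounding the cost of the local optimum by twice the cost of the restriction of the global optimum). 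So the proof I would write is a two-line composition, stating only that the per-block hypothesis from \autoref{lemma:cwboundedcoreset} is fed into \autoref{lemma:unionbounded} to obtain the global $2\epsilon$-bounded coreset guarantee.
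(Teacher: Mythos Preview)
Your proposal is correct and matches the paper's approach exactly: the paper states that the lemma ``immediately follows'' by combining \autoref{lemma:cwboundedcoreset} with \autoref{lemma:unionbounded}, which is precisely the two-step composition you describe.
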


It is possible to bound the size of $C_w$ as a function of the doubling
dimension $D$. For any
$\ell=1,\ldots,L$ and $x \in P_\ell$, it holds that $R_{\ell}\cdot
|P_\ell| = \nu_{P_\ell}(T_\ell) \geq d(x,T_\ell)$, thus we can
bound the size of $C_{w,\ell}$ by using
\autoref{theorem:sizecoverwithballs}. 
Since $C_w$ is the union of
those sets, this argument proves the following lemma.

\begin{lemma}
Let $\inst = (P,k)$ be a $k$-median instance. Suppose that the points in $P$ belong to a metric space with doubling dimension $D$. Let $C_w$ be the set
 returned by the above MapReduce algorithm with input $\inst$ and $m\geq k$. Then,  $|C_w| = O\left( L\cdot m \cdot (16\beta/\epsilon)^{D} \log{|P|}  \right)$
\end{lemma}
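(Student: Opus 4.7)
The plan is to apply \autoref{theorem:sizecoverwithballs} to each local coreset $C_{w,\ell}$ and then sum the bounds. The key ingredient is identifying a suitable constant $c_\ell$ such that $c_\ell \cdot R_\ell \geq d(x, T_\ell)$ for every $x \in P_\ell$.

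First, I would fix $\ell \in \{1,\ldots,L\}$ and observe that by the definition $R_\ell = \nu_{P_\ell}(T_\ell)/|P_\ell|$, for any single point $x \in P_\ell$ we have
\[
d(x, T_\ell) \le \nu_{P_\ell}(T_\ell) = |P_\ell| \cdot R_\ell,
\]
since all distances are nonnegative. Hence the choice $c_\ell = |P_\ell|$ satisfies the hypothesis of \autoref{theorem:sizecoverwithballs}. Trivially $|P_\ell| \leq |P|$, so $\log_2 c_\ell \leq \log_2 |P|$.

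Next, \autoref{theorem:sizecoverwithballs} applied with $|T_\ell| = m$ yields
\[
|C_{w,\ell}| \;\le\; m \cdot (16\beta/\epsilon)^D \cdot (\log_2 |P_\ell| + 2) \;=\; O\!\left(m \cdot (16\beta/\epsilon)^D \log |P|\right).
\]
Since $C_w = \bigcup_{\ell=1}^L C_{w,\ell}$, a union bound on sizes gives
\[
|C_w| \;\le\; \sum_{\ell=1}^L |C_{w,\ell}| \;=\; O\!\left(L \cdot m \cdot (16\beta/\epsilon)^D \log |P|\right),
\]
which is exactly the claimed bound.

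There is essentially no obstacle here: the nontrivial work has already been done in \autoref{theorem:sizecoverwithballs} and \autoref{lemma:sizeclique}. The only subtlety is noting that even though $c_\ell = |P_\ell|$ can be as large as $|P|/L$, it appears only inside a logarithm, so the dependence on $|P|$ in the final bound is mild. The factor $L$ comes purely from summing over the partition classes, and the factor $m$ reflects that each invocation of $\texttt{CoverWithBalls}$ is seeded by $m$ centers in $T_\ell$.
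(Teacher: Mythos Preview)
Your proof is correct and essentially identical to the paper's argument: you observe that $d(x,T_\ell) \le \nu_{P_\ell}(T_\ell) = |P_\ell|\cdot R_\ell$, apply \autoref{theorem:sizecoverwithballs} with $c=|P_\ell|$ to each $C_{w,\ell}$, and sum over $\ell$. The paper does exactly this, stating the key inequality $R_\ell\cdot|P_\ell| = \nu_{P_\ell}(T_\ell) \ge d(x,T_\ell)$ in the text immediately preceding the lemma.
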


Let $S$ be an $\alpha$-approximate solution of $\inst' =(C_w,k)$, with constant $\alpha$. We will now show that $\nu_{P}(S)/\nu_{P}(\opt_\inst) = 2\alpha + O(\epsilon)$. 
Let $\tau$ be the map of 
from $P$ to $C_w$ (see \autoref{lemma:taucoverwithballs}). By triangle inequality, $\nu_{P}(S) \leq \sum_{x \in P}d(x,\tau(x)) + \nu_{C_w}(S)$. We have that $\sum_{x \in P}d(x,\tau(x)) \leq 2\epsilon \cdot \nu_{P}(\opt_\inst)$ since,
by \autoref{lemma:coresetkmedian},
 $C_w$ is a $2\epsilon$-bounded coreset. 
By the fact that $S$ is an $\alpha$-approximate solution of $\inst'$ and by
\autoref{lemma:optimalsolrelation}, we have that 
$\nu_{C_w}(S) \leq \alpha \cdot \nu_{C_w}(\opt_{\inst'}) \leq 2\alpha \cdot \nu_{C_w}(\opt_\inst)$. By \autoref{lemma:boundedtostrongkmedian}, $C_w$ is also a $2\epsilon$-approximate coreset of $\inst$, thus $\nu_{C_w}(\opt_\inst) \leq (1+2\epsilon)\nu_{P}(\opt_\inst)$. Putting it all together, we have that $\nu_{P}(S)/\nu_{P}(\opt_\inst) \leq 2\alpha(1+2\epsilon)+2\epsilon = 2\alpha + O(\epsilon)$.
We observe that the factor $2$ is due to the inequality which relates $\opt_\inst$ and $\opt_{\inst'}$, namely $\nu_{C_w}(\opt_{\inst'}) \leq 2\nu_{C_w}(\opt_\inst)$. In the next subsection, we will show how to get rid of this 
factor.

\paragraph*{Application to the continuous case}
The same algorithm of this subsection can also be used to build a
$O(\epsilon)$-approximate coreset in the continuous scenario where
centers are not required to belong to $P$. It is easy to verify that
the construction presented in this subsection also works in the
continuous case, with the final 
approximation factor improving to $(\alpha +
  O(\epsilon))$.
Indeed, we can use the stronger inequality
$\nu_{C_w}(\opt_{\inst'}) \leq \nu_{C_w}(\opt_\inst)$, as
$\opt_{\inst}$ is also a solution of $\inst'$, which allows us to avoid
the factor $2$ in front of $\alpha$. While
the same approximation guarantee has
already been achieved in the literature using more space-efficient but
randomized coreset constructions 
\cite{BalcanEL13,BachemLK18}, as mentioned in the
introduction, this result provides evidence of the general
applicability of our approach.


\subsection{Coreset construction for $k$-median}
\label{subsection:coresetkmedian}
In this subsection, we present a $2$-round MapReduce algorithm which
computes a weighted subset which is both an $O(\epsilon)$-bounded
coreset and an $O(\epsilon)$-centroid set of an input instance $\inst
= (P,k)$ of $k$-median. The algorithm is similar to the one of the
previous subsection, but applies $\texttt{CoverWithBalls}$ twice in
every subset of the partition. This idea is inspired by the strategy
presented in \cite{Har-Peled2004} for $\mathbb{R}^d$, where a double
exponential grid construction is used to ensure that the returned
subset is a centroid set. 

\noindent{\bf First Round.}
$P$ is partitioned into $L$ equally-sized subsets $P_1,\ldots,P_L$. Then in
parallel, on each $k$-median instance $\inst_\ell = (P_\ell,k)$, with
$\ell=1,\ldots,L$, the following steps are performed:
\begin{enumerate}
\item Compute a set $T_\ell$ of $m\geq k$ points such that $\nu_{P_\ell}(T_\ell) \leq \beta \cdot \nu_{P_\ell}(\opt_{\inst_\ell})$.
\item $R_\ell \longleftarrow \nu_{P_\ell}(T_\ell)/|P_\ell|$.
\item $C_{w,\ell} \longleftarrow \texttt{CoverWithBalls}(P_\ell,T_\ell,R_\ell,\epsilon,\beta)$.
\end{enumerate}
\noindent{\bf Second Round.} 
Let $C_w = \cup_{\ell =1}^{L}C_{w,\ell}$. The same partition of $P$ of the first round is used.  Together with $P_{\ell}$, the $\ell$-th reducer  receives a copy of  $C_w$, and all values $R_i$ computed in the previous round, for $i = 1, \ldots, L$. On each $k$-median instance $\inst_\ell = (P_\ell,k)$, with $\ell = 1,\ldots,L$, the following steps are performed:
\begin{enumerate} 
\item $R \longleftarrow \sum_{i=1}^{L} |P_i| \cdot R_i / |P|$
\item $E_{w,\ell} \longleftarrow \texttt{CoverWithBalls}(P_\ell,C_{w},R,\epsilon,\beta)$.
\end{enumerate}

The set $E_w = \cup_{\ell=1}^{L}E_{w,\ell}$ is the output of the
algorithm.  The computation of $T_{\ell}$ in the first round is accomplished as described in the previous section. 

The following lemma characterizes the properties of $E_w$. 
\begin{lemma}
\label{lemma:centroidkmedian}
Let $\inst = (P,k)$ be a $k$-median instance. Then, the set  $E_w$ returned by
the above MapReduce algorithm is both a $2\epsilon$-bounded coreset
and a $7\epsilon$-centroid set of $\inst$.
\end{lemma}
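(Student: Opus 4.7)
My plan is to establish the two properties of $E_w$ separately. The key observation is that the set $C_w$ produced in the first round is identical to the coreset analyzed in Subsection~\ref{subsection:approachkmedian}, so by \autoref{lemma:coresetkmedian} it is already a $2\epsilon$-bounded coreset of $\inst$; I shall use this both to control distances from $P$ to $C_w$ and to convert such distances into a fraction of $\nu_P(\opt_\inst)$. The second round then ``refines'' $C_w$ into $E_w$ by running \texttt{CoverWithBalls} once more within each $P_\ell$, using $C_w$ as the reference center set and the global threshold $R$.

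For the bounded-coreset property, \autoref{lemma:taucoverwithballs} applied to the second-round call on $P_\ell$ yields a map $\tau_\ell: P_\ell \to E_{w,\ell}$ with $d(x,\tau_\ell(x)) \leq (\epsilon/(2\beta))\max\{R, d(x,C_w)\}$, with $E_{w,\ell}$ weighted according to $\tau_\ell$. I glue the $\tau_\ell$'s into a single $\tau: P \to E_w$, sum, and bound $\max\{R, d(x,C_w)\} \leq R + d(x,C_w)$, reducing the task to bounding $R|P|$ and $\sum_{x \in P} d(x, C_w)$. The first is $\sum_i \nu_{P_i}(T_i) \leq 2\beta \nu_P(\opt_\inst)$ by Step~1 of the first round combined with \autoref{lemma:optimalsolrelation}(a). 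The second is at most $2\epsilon\nu_P(\opt_\inst)$ by \autoref{lemma:coresetkmedian}. Assembling yields $\sum_x d(x,\tau(x)) \leq (\epsilon + \epsilon^2/\beta)\nu_P(\opt_\inst) \leq 2\epsilon \nu_P(\opt_\inst)$ (using $\beta \geq 1$, $\epsilon < 1$).

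For the centroid-set property, the key idea is that each optimal center $o \in \opt_\inst$ is itself a point of $P$, hence lies in some $P_{\ell(o)}$, and therefore inherits an image $e(o) := \tau_{\ell(o)}(o) \in E_{w,\ell(o)}$ satisfying $d(o, e(o)) \leq (\epsilon/(2\beta))\max\{R, d(o, C_w)\}$. I take $X = \{e(o) : o \in \opt_\inst\} \subseteq E_w$, which has $|X| \leq k$. For any $p \in P$, writing $o(p) = p^{\opt_\inst}$ and applying the triangle inequality, $d(p, X) \leq d(p, o(p)) + d(o(p), e(o(p)))$. Summing reduces the problem to bounding $R|P|$ (already handled) and $\sum_{p \in P} d(o(p), C_w)$. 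For the latter, another triangle inequality gives $d(o(p), C_w) \leq d(p, o(p)) + d(p, C_w)$, so the sum is at most $(1 + 2\epsilon)\nu_P(\opt_\inst)$, invoking \autoref{lemma:coresetkmedian} once more. Collecting constants yields $\nu_P(X) \leq (1 + 7\epsilon)\nu_P(\opt_\inst)$ (and in fact a smaller constant suffices).

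The main obstacle is the centroid-set step: the candidate $X$ must be picked purely from $E_w$, so I need each $o \in \opt_\inst$ to be ``hit'' by some $e(o) \in E_w$ whose proximity to $o$ can be paid for by the global budget $\nu_P(\opt_\inst)$. This is precisely why the first round must already produce $C_w$ with strong bounded-coreset guarantees, and why the second round uses the \emph{global} average threshold $R$ rather than the local $R_\ell$: without this, a point $o \in P_\ell$ with a locally exceptional $R_\ell$ or a locally imprecise $T_\ell$ could spoil the telescoping used to translate the $\max\{R,d(o,C_w)\}$ term into a linear-in-$\epsilon$ error.
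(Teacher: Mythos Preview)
Your proof of the $2\epsilon$-bounded coreset property is essentially identical to the paper's (modulo renaming $\phi$ vs.\ $\tau$): both invoke \autoref{lemma:taucoverwithballs} on the second-round \texttt{CoverWithBalls} call, bound $\max\{R,d(x,C_w)\}\le R+d(x,C_w)$, and then control $|P|R\le 2\beta\,\nu_P(\opt_\inst)$ via \autoref{lemma:optimalsolrelation} and $\nu_P(C_w)\le 2\epsilon\,\nu_P(\opt_\inst)$ via the first-round analysis.

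For the $7\epsilon$-centroid set property your argument is correct but takes a genuinely different, more direct route than the paper. The paper defines $X=\{x^{E_w}:x\in\opt_\inst\}$ and detours through $C_w$: it bounds $\nu_P(X)\le \sum_x d(x,\tau(x))+\nu_{C_w}(X)$ where $\tau:P\to C_w$ is the first-round map, then splits $\nu_{C_w}(X)\le \nu_{C_w}(\opt_\inst)+\sum_{x\in C_w}w(x)\,d(x^{\opt_\inst},X)$, invoking the $2\epsilon$-\emph{approximate} coreset property of $C_w$ (via \autoref{lemma:boundedtostrongkmedian}) to handle $\nu_{C_w}(\opt_\inst)$, and finally bounding $d(x^{\opt_\inst},E_w)$ through $\phi(x^{\opt_\inst})$. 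This accumulates the terms $2\epsilon+(1+2\epsilon)+3\epsilon$, hitting exactly $1+7\epsilon$. You instead exploit directly that each $o\in\opt_\inst$ lies in some $P_{\ell(o)}$ and hence has a second-round image $e(o)\in E_w$ with $d(o,e(o))\le(\epsilon/2\beta)\max\{R,d(o,C_w)\}$; a single triangle inequality $d(p,X)\le d(p,o(p))+d(o(p),e(o(p)))$ and the bound $d(o(p),C_w)\le d(p,o(p))+d(p,C_w)$ then suffice. Your approach avoids the approximate-coreset step entirely and yields the sharper constant $(1+5\epsilon/2)$, comfortably below $1+7\epsilon$. The paper's route, on the other hand, makes the role of $C_w$ as an intermediate summary more explicit and mirrors the structure later reused in the $k$-means proof (\autoref{lemma:centroidkmeans}).
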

\begin{proof}
The first three steps of the algorithm are in common with the
algorithm of \autoref{subsection:coresetkmedian}. By
\autoref{lemma:cwboundedcoreset}, for $\ell=1,...,L$, the sets
$C_{w,\ell}$ are $\epsilon$-bounded coresets of $\inst_\ell$. Let $C_w
= \cup_{\ell=1}^{L}C_{w,\ell}$. By \autoref{lemma:unionbounded}, the
set $C_w$ is a $2\epsilon$-bounded coreset of $\inst$, and also, by
\autoref{lemma:boundedtostrongkmedian}, a $2\epsilon$-approximate
coreset. Let $\tau(x)$ be the map from $P$ to $C_w$ as specified
in \autoref{definition:bounded}. It holds that $\nu_{P}(C_w) \leq
\sum_{x \in P}d(x,\tau(x)) \leq 2\epsilon \cdot \nu_{P}(\opt_\inst)$. Let $\phi_\ell$ be the map of
\autoref{lemma:taucoverwithballs} from the points in $P_\ell$ to the
points in $E_{w,\ell}$. By reasoning as in the proof of
\autoref{lemma:cwboundedcoreset}, we obtain that $\sum_{x \in
P_\ell}d(x, \phi_\ell(x)) \leq \epsilon/(2\beta)\left[|P_\ell| \cdot
R + \nu_{P_\ell}(C_{w})\right]$. For any $x \in P$, let $\hat{\ell}$ be the index for which $x \in P_{\hat{\ell}}$, we define $\phi(x) = \phi_{\hat{\ell}}(x)$. We have that
\begin{align*}
  \sum_{x \in P}d(x,\phi(x)) \leq \frac{\epsilon}{2\beta}\sum_{\ell = 1}^{L} \left[ R\cdot |P_\ell| + \nu_{P_\ell}(C_w) \right] = \frac{\epsilon}{2\beta}\left(\left(\sum_{\ell=1}^{L} | P_\ell | \cdot R_\ell\right) + \nu_{P}(C_w)  \right)
\end{align*}
where in the last equality we applied the definition of $R$. Since
$|P_\ell|\cdot R_\ell = \nu_{P_\ell}(T_\ell) \leq \beta \cdot
\nu_{P_\ell}(\opt_{\inst_\ell}) \leq 2\beta\cdot\nu_{P_\ell}(\opt_\inst)$,  where the last inequality follows from \autoref{lemma:optimalsolrelation}, we have that $\sum_{\ell=1}^{L}
  |P_\ell|\cdot R_\ell \leq 2\beta \cdot \nu_{P}(\opt_\inst)$. Additionally, $\nu_{P}(C_w) \leq 2\epsilon \cdot \nu_{P}(\opt_\inst)$ as argued previously in the proof. Therefore $E_w$ is a $2\epsilon$-bounded coreset. 

We now show that $E_w$ is a $7\epsilon$-centroid set of $\inst$. Let $X = \{ x^{E_w} : x \in \opt_\inst \}$. We will prove that $\nu_{P}(X) \leq (1+7\epsilon)\nu_{P}(\opt_\inst)$. By triangle inequality, we obtain that:
\begin{align*}
    \nu_{P}(X) = \sum_{x \in P}d(x,X) \leq \sum_{x \in P}d(x,\tau(x))+\sum_{x \in P}d(\tau(x),X)
\end{align*}
The first term of the above sum can be bounded as $\sum_{x \in P}d(x,\tau(x)) \leq 2\epsilon \cdot \nu_{P}(\opt_\inst)$, since $C_w$ is a $2\epsilon$-bounded coreset. Also, we notice that the second term of the sum can be rewritten as $\sum_{x \in P}d(\tau(x),X) = \sum_{x \in C_w}w(x)d(x,X)$, due to the relation between $\tau$ and $w$. By triangle inequality, we obtain that:
\begin{align*}
    \sum_{x \in C_w}w(x)d(x,X) \leq \sum_{x \in C_w}w(x)d(x,x^{\opt_\inst})+\sum_{x \in C_w}w(x)d(x^{\opt_\inst},X)
\end{align*}
Since $C_w$ is a $2\epsilon$-approximate coreset, we can use the bound
$\sum_{x \in C_w}w(x)d(x,x^{\opt_\inst}) = \nu_{C_w}(\opt_{\inst})
\leq (1+2\epsilon)\nu_{P}(\opt_\inst)$.  Also, by using the definition of $X$, we observe that
\begin{align*}
    \sum_{x \in C_w}w(x)d(& x^{\opt_\inst}, X) = \sum_{x \in C_w}w(x)d(x^{\opt_\inst},E_w) \leq \sum_{x \in C_w}w(x)d(x^{\opt_\inst},\phi(x^{\opt_\inst}))\\
    &\leq \frac{\epsilon}{2\beta} \sum_{x \in C_w} w(x) \cdot \left( R + 
d(x^{\opt_\inst}, C_w) \right) \leq \frac{\epsilon}{2\beta}\left(\left( \sum_{\ell=1}^{L} |P_\ell| \cdot R_\ell\right) + \nu_{C_w}(\opt_\inst) \right)
\end{align*}
In the last inequality, we used the definition of $R$, and the simple observation that for any $x \in C_{w}$, $d(x^{\opt_\inst},C_{w}) \leq d(x,x^{\opt_\inst}) = d(x,\opt_\inst)$. As argued previously in the proof, we have that $\sum_{\ell} |P_\ell| \cdot R_\ell \leq 2\beta \cdot \nu_{P}(\opt_\inst)$. Also, $\nu_{C_w}(\opt_\inst) \leq (1+2\epsilon)\nu_{P}(\opt_\inst)$ as $C_w$ is a $2\epsilon$-approximate coreset of $\inst$. Since we assume that $\beta \geq 1$, we finally obtain:
\begin{align*}
\sum_{x \in C_w}w(x)d(x^{\opt_\inst},X) \leq \frac{\epsilon}{2\beta}( 2\beta + 1 + 2\epsilon) \nu_{P}(\opt_\inst) \leq 3\epsilon \cdot \nu_{P}(\opt_\inst)
\end{align*}
We conclude that $\nu_{P}(X) \leq (2\epsilon+1+2\epsilon+3\epsilon)\nu_{P}(\opt_\inst) = (1+7\epsilon) \cdot \nu_{P}(\opt_\inst)$
\end{proof}

The next lemma establishes an upper bound on the size of $E_w$.

\begin{lemma} \label{lemma:kmediansize}   
Let $\inst = (P,k)$ be a $k$-median instance. Suppose that the points in $P$ belong to a metric space with doubling dimension $D$. Let $E_w$ be the set returned by the above MapReduce algorithm with input $\inst$ and $m\geq k$. Then $|E_w| = O\left( L^2\cdot m \cdot (16\beta/\epsilon)^{2D} \log^2{|P|}  \right)$.
\end{lemma}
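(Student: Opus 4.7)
The plan is to apply \autoref{theorem:sizecoverwithballs} to each call of $\texttt{CoverWithBalls}$ made in the second round, and then sum over the $L$ partitions. Recall that in the second round, for each $\ell$, we invoke $\texttt{CoverWithBalls}(P_\ell,C_w,R,\epsilon,\beta)$, so the theorem gives $|E_{w,\ell}| \leq |C_w| \cdot (16\beta/\epsilon)^D \cdot (\log_2 c_\ell + 2)$, where $c_\ell$ is any value satisfying $c_\ell \cdot R \geq d(x,C_w)$ for every $x \in P_\ell$. The task therefore reduces to (a) producing a good value of $c_\ell$, and (b) bounding $|C_w|$ from the first round, which has already been derived in the unnumbered lemma concluding \autoref{subsection:approachkmedian}.

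The main step — and the only mildly nontrivial one — is to pin down $c_\ell$. Fix $\ell$ and $x \in P_\ell$. Since $C_{w,\ell} \subseteq C_w$, we have $d(x,C_w) \leq d(x,C_{w,\ell})$, and by \autoref{lemma:taucoverwithballs} applied to the first-round call $\texttt{CoverWithBalls}(P_\ell,T_\ell,R_\ell,\epsilon,\beta)$, we get $d(x,C_{w,\ell}) \leq (\epsilon/(2\beta))\max\{R_\ell,d(x,T_\ell)\} \leq \max\{R_\ell,d(x,T_\ell)\}$, using $\epsilon/(2\beta) \leq 1$. Since $d(x,T_\ell) \leq \nu_{P_\ell}(T_\ell) = R_\ell \cdot |P_\ell|$, and $|P_\ell|\geq 1$, both quantities inside the $\max$ are at most $R_\ell \cdot |P_\ell|$. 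Finally, from the definition $R = \sum_i |P_i| R_i / |P|$ we deduce $R_\ell \cdot |P_\ell| \leq \sum_i |P_i|R_i = R \cdot |P|$. Hence $d(x,C_w) \leq R \cdot |P|$, so one can take $c_\ell = |P|$, making $\log_2 c_\ell + 2 = O(\log|P|)$.

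Plugging this into the theorem yields $|E_{w,\ell}| = O\bigl(|C_w| \cdot (16\beta/\epsilon)^D \cdot \log |P|\bigr)$, uniformly in $\ell$. Since $E_w = \cup_\ell E_{w,\ell}$, a union bound gives $|E_w| = O\bigl(L \cdot |C_w| \cdot (16\beta/\epsilon)^D \cdot \log|P|\bigr)$. Substituting the first-round estimate $|C_w| = O\bigl(L \cdot m \cdot (16\beta/\epsilon)^D \log|P|\bigr)$ established in \autoref{subsection:approachkmedian} immediately produces the claimed $O\bigl(L^2 \cdot m \cdot (16\beta/\epsilon)^{2D}\log^2|P|\bigr)$ bound.

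I expect the only conceptual obstacle to be the identification of a clean value for $c_\ell$: one has to recognize that the second-round radius $R$, being a weighted average of the $R_i$'s, gives the uniform ``anchor'' $R \cdot |P|$ that dominates every $R_\ell \cdot |P_\ell|$, and then chain this with the first-round covering guarantee to upper bound $d(x,C_w)$. Once this observation is in place, both the per-partition estimate and the final aggregation are immediate.
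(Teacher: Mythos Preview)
Your proof is correct and follows essentially the same structure as the paper's: bound $|C_w|$ from the first round, establish a value $c$ such that $d(x,C_w) \leq c\cdot R$ for all $x$, apply \autoref{theorem:sizecoverwithballs} to each $E_{w,\ell}$, and sum. The only minor difference is in how the constant $c$ is obtained: the paper routes the bound through the $\epsilon$-bounded coreset property (obtaining $d(x,C_w)\leq \nu_P(C_w)\leq \epsilon|P|\cdot R$, hence $c=\epsilon|P|$), whereas you chain the \texttt{CoverWithBalls} guarantee directly with $d(x,T_\ell)\leq \nu_{P_\ell}(T_\ell)=R_\ell|P_\ell|\leq R|P|$ to get $c=|P|$; both yield $\log_2 c = O(\log|P|)$ and the same final bound.
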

\begin{proof}
From the previous subsection, we know that $|C_w| = O\left(L \cdot 
m\cdot(16\beta/\epsilon)^{D}\log{|P|} \right)$. Also, by \autoref{lemma:cwboundedcoreset}, we have that $\nu_{P_\ell}(C_{w,\ell}) \leq \epsilon \cdot \nu_{P_\ell}(\opt_{\inst_\ell})$ for any $\ell = 1, \ldots, L$. For every $x \in P$
we have that $\epsilon| P|\cdot R = \epsilon \sum_{\ell}|P_\ell|\cdot R_\ell = \epsilon\sum_{\ell}\nu_{P_\ell}(T_\ell) \geq  \sum_{\ell}\epsilon\cdot \nu_{P_\ell}(\opt_{\inst_\ell}) \geq
\sum_{\ell}\nu_{P_\ell}(C_{w,\ell}) \geq \nu_P(C_w) \geq d(x,C_{w}) $.  The lemma follows by
applying \autoref{theorem:sizecoverwithballs}
to bound the sizes of the sets $E_{w,\ell}$.
\end{proof}

We are now ready to state the main result of this subsection.
\begin{theorem} \label{theorem:kmediancoresetfactor}
Let $\inst = (P,k)$ be a $k$-median instance and
let $E_w$ be the set returned by the above MapReduce algorithm
for a fixed $\epsilon \in (0,1)$. Let
$\mathcal{A}$ be an $\alpha$-approximation algorithm for the
$k$-median problem, with constant $\alpha$. If $S$ is the solution returned by $\mathcal{A}$
with input $\inst' = (E_w,k)$, then $\nu_{P}(S)/\nu_{P}(\opt_\inst)
\leq \alpha + O(\epsilon)$.
\end{theorem}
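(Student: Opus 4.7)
\medskip

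\noindent\textbf{Proof plan for Theorem \ref{theorem:kmediancoresetfactor}.}
The plan is to combine the two properties of $E_w$ established in \autoref{lemma:centroidkmedian}, namely that $E_w$ is simultaneously a $2\epsilon$-bounded coreset and a $7\epsilon$-centroid set of $\inst$. The centroid set property will provide a ``witness'' solution $X\subseteq E_w$ of $\inst'=(E_w,k)$ whose cost on the original instance is close to $\nu_P(\opt_\inst)$, while the approximate-coreset property (inherited from being bounded via \autoref{lemma:boundedtostrongkmedian}) will let me translate costs measured on $E_w$ back to costs on $P$. The crucial benefit over the construction in \autoref{subsection:approachkmedian} is that the optimum of $\inst'$ can now be compared directly to $\nu_P(\opt_\inst)$ rather than to $\nu_{C_w}(\opt_\inst)$ via \autoref{lemma:optimalsolrelation}, which is precisely what eliminates the extraneous factor of $2$.

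First, I would invoke \autoref{definition:centroidset} together with the $7\epsilon$-centroid property to obtain a subset $X\subseteq E_w$ with $|X|\leq k$ and $\nu_P(X)\leq (1+7\epsilon)\,\nu_P(\opt_\inst)$. Since $X$ is a feasible solution of $\inst'$, by optimality $\nu_{E_w}(\opt_{\inst'})\leq \nu_{E_w}(X)$. Then I would apply the fact that $E_w$ is a $2\epsilon$-approximate coreset of $\inst$ (via \autoref{lemma:boundedtostrongkmedian}) to both $X$ and $S$: this yields $\nu_{E_w}(X)\leq (1+2\epsilon)\,\nu_P(X)$ and $(1-2\epsilon)\,\nu_P(S)\leq \nu_{E_w}(S)$.

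Stringing these together with the $\alpha$-approximation guarantee of $\mathcal{A}$ on $\inst'$, I obtain
\begin{align*}
\nu_P(S)\ &\leq\ \frac{\nu_{E_w}(S)}{1-2\epsilon}\ \leq\ \frac{\alpha\,\nu_{E_w}(\opt_{\inst'})}{1-2\epsilon}\ \leq\ \frac{\alpha\,\nu_{E_w}(X)}{1-2\epsilon} \\
&\leq\ \frac{\alpha(1+2\epsilon)\,\nu_P(X)}{1-2\epsilon}\ \leq\ \frac{\alpha(1+2\epsilon)(1+7\epsilon)}{1-2\epsilon}\,\nu_P(\opt_\inst).
\end{align*}
For $\epsilon\in(0,1)$ sufficiently small and constant $\alpha$, the prefactor expands as $\alpha + O(\epsilon)$, which yields the claimed bound.

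I do not expect any real obstacle here: every ingredient has already been assembled in \autoref{lemma:centroidkmedian}, \autoref{lemma:boundedtostrongkmedian}, and the definition of an $\alpha$-approximation. The only care point is to apply the coreset inequality in the correct direction for each of the two solutions involved (one needs the lower bound on $\nu_{E_w}(S)$ in terms of $\nu_P(S)$, and the upper bound on $\nu_{E_w}(X)$ in terms of $\nu_P(X)$), and to rely on the centroid-set property rather than \autoref{lemma:optimalsolrelation} when bounding $\nu_{E_w}(\opt_{\inst'})$, which is exactly how the factor $2$ in the previous subsection is avoided.
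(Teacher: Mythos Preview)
Your proof is correct and follows essentially the same route as the paper: use the $7\epsilon$-centroid property to produce a witness $X\subseteq E_w$, then the $2\epsilon$-approximate coreset property (via \autoref{lemma:boundedtostrongkmedian}) to transfer costs between $E_w$ and $P$. The only difference is in how you pass from $\nu_{E_w}(S)$ back to $\nu_P(S)$: the paper uses the bounded-coreset map $\tau$ and the triangle inequality to get the additive bound $\nu_P(S)\le \sum_{x\in P}d(x,\tau(x))+\nu_{E_w}(S)\le 2\epsilon\,\nu_P(\opt_\inst)+\nu_{E_w}(S)$, whereas you use the approximate-coreset inequality multiplicatively, dividing by $1-2\epsilon$. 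Both give $\alpha+O(\epsilon)$; the paper's additive version has the minor advantage that it is valid for every $\epsilon\in(0,1)$, while your bound needs $\epsilon<1/2$ for $1-2\epsilon>0$, which is why you had to add the qualifier ``sufficiently small''.
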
 
\begin{proof}
Let $\tau$ be the map from $P$ to $E_w$ of \autoref{definition:bounded}. By triangle inequality, it results that $\nu_{P}(S) \leq \sum_{x \in P}d(x,\tau(x)) + \nu_{E_w}(S)$.
The set $E_w$ is a $2\epsilon$-bounded coreset of $\inst$, so we have that $\sum_{x \in P}d(x,\tau(x)) \leq 2\epsilon \cdot \nu_{P}(\opt_\inst)$. Since $\mathcal{A}$ is an $\alpha$-approximation algorithm, we have that $\nu_{E_w}(S) \leq \alpha \cdot \nu_{E_w}(\opt_{\inst'})$. As $E_w$ is also a $7\epsilon$-centroid set, there exists a solution $X \subseteq E_w$ such that $\nu_{P}(X) \leq (1+7\epsilon)\nu_{P}(\opt_\inst)$. We obtain that $\nu_{E_w}(\opt_{\inst'}) \leq \nu_{E_w}(X) \leq (1+2\epsilon)(1+7\epsilon)\nu_{P}(\opt_\inst)$. In the last inequality, we used the fact that $E_w$ is a $2\epsilon$-approximate coreset of $\inst$ due to \autoref{lemma:boundedtostrongkmedian}. To wrap it up, $\nu_{P}(X)/\nu_{P}(\opt_\inst) \leq \alpha (1+7\epsilon)(1+2\epsilon) + 2\epsilon = \alpha + O(\epsilon)$.
\end{proof}

\subsection{Coreset construction for $k$-means}
\label{subsection:coresetkmeans}

In this subsection, we present a $2$-round MapReduce algorithm to
compute a weighted subset $E_w$ which is both an
$O(\epsilon^2)$-approximate coreset and a $O(\epsilon)$-centroid set
of an instance $\inst$ of $k$-means and then show that an
$\alpha$-approximate solution of $\inst' = (E_w,k)$ is an $(\alpha +
O(\epsilon))$-approximate solution of $\inst$.
The algorithm is an adaptation of the one devised in the previous
subsection for $k$-median, with suitable tailoring of the parameters
involved to account for the presence of squared distances in the
objective function of $k$-means. 

\noindent{\bf First Round.}
$P$ is partitioned into $L$ equally-sized subsets $P_1,\ldots,P_L$. Then in
parallel, on each $k$-means instance $\inst_\ell = (P_\ell,k)$, with
$\ell=1,\ldots,L$, the following steps are performed:
\begin{enumerate}
\item Compute a set $T_\ell$ of $m\geq k$ points such that $\mu_{P_\ell}(T_\ell) \leq \beta \cdot \mu_{P_\ell}(\opt_{\inst_\ell})$.
\item $R_\ell \longleftarrow  \sqrt{\mu_{P_\ell}(T_\ell)/|P_\ell|}$.
\item $C_{w,\ell} \longleftarrow \texttt{CoverWithBalls}(P_\ell,T_\ell,R_\ell,\sqrt{2}\epsilon,\sqrt{\beta})$.
\end{enumerate}
\noindent{\bf Second Round.} 
Let $C_w = \cup_{\ell =1}^{L}C_{w,\ell}$. The same partition of $P$ of the first round is used.  Together with $P_{\ell}$, the $\ell$-th reducer  receives a copy of  $C_w$, and all values $R_i$ computed in the previous round, for $i = 1, \ldots, L$. On each $k$-means instance $\inst_\ell = (P_\ell,k)$, with $\ell = 1,\ldots,L$, the following steps are performed:
\begin{enumerate} 
\item $R \longleftarrow \sqrt{\sum_{i=1}^{L} |P_i| \cdot R_i^2 / |P|}$
\item $E_{w,\ell} \longleftarrow \texttt{CoverWithBalls}(P_\ell,C_{w},R,\sqrt{2}\epsilon,\sqrt{\beta})$.
\end{enumerate}
The set $E_w = \cup_{\ell=1}^{L}E_{w,\ell}$ is the output of the
algorithm. The computation of $T_\ell$ in the first round can be accomplished using the  the linear-space 
constant approximation algorithms of \cite{Gupta2005,Kanungo2002}.

The analysis follows the lines of the one carried out for the $k$-median 
coreset construction. The following lemma establishes the properties of each $C_{w,\ell}$.
\begin{lemma}
\label{lemma:cwboundedcoresetkmeans}
For $\ell=1,\ldots,L$, $C_{w,\ell}$ is a $\epsilon^2$-bounded coreset 
of the $k$-means instance $\inst_\ell$.
\end{lemma}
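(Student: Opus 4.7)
The plan is to mimic the proof of Lemma \ref{lemma:cwboundedcoreset} (the $k$-median analogue), with the only subtlety being that now distances appear squared in the cost function. The parameters of the call to \texttt{CoverWithBalls} in Step 3, namely $\sqrt{2}\epsilon$ in place of $\epsilon$ and $\sqrt{\beta}$ in place of $\beta$, together with the modified definition $R_\ell = \sqrt{\mu_{P_\ell}(T_\ell)/|P_\ell|}$, are tuned precisely so that squaring the per-point distance bound yields an $\epsilon^2$ factor in front of $\mu_{P_\ell}(\opt_{\inst_\ell})$.

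First I would fix an index $\ell$ and apply Lemma \ref{lemma:taucoverwithballs} to the call $\texttt{CoverWithBalls}(P_\ell, T_\ell, R_\ell, \sqrt{2}\epsilon, \sqrt{\beta})$. This gives a map $\tau_\ell : P_\ell \to C_{w,\ell}$ with respect to which $C_{w,\ell}$ is weighted, and which satisfies
\[
d(x, \tau_\ell(x)) \;\le\; \frac{\sqrt{2}\,\epsilon}{2\sqrt{\beta}} \max\{R_\ell,\, d(x, T_\ell)\} \;=\; \frac{\epsilon}{\sqrt{2\beta}} \max\{R_\ell,\, d(x, T_\ell)\}
\]
for every $x \in P_\ell$. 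Squaring and using the elementary inequality $\max\{a,b\}^2 \le a^2 + b^2$, I would then sum over $x \in P_\ell$ to obtain
\[
\sum_{x \in P_\ell} d(x, \tau_\ell(x))^2 \;\le\; \frac{\epsilon^2}{2\beta} \Bigl( |P_\ell|\cdot R_\ell^2 + \mu_{P_\ell}(T_\ell) \Bigr).
\]

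Finally, by the definition of $R_\ell$ we have $|P_\ell| \cdot R_\ell^2 = \mu_{P_\ell}(T_\ell)$, so the right-hand side collapses to $(\epsilon^2/\beta)\,\mu_{P_\ell}(T_\ell)$, which is at most $\epsilon^2 \cdot \mu_{P_\ell}(\opt_{\inst_\ell})$ by the approximation guarantee of Step 1. This is exactly the bound required by Definition \ref{definition:bounded} for $C_{w,\ell}$ to be an $\epsilon^2$-bounded coreset of $\inst_\ell$. There is no real obstacle here: the proof avoids Proposition \ref{proposition:squareddistance} entirely because no triangle inequality is needed, and the whole argument reduces to verifying that the $\sqrt{2}$ and $\sqrt{\beta}$ in the parameters correctly absorb both the slack introduced by $\max\{a,b\}^2 \le a^2+b^2$ and the bi-criteria factor $\beta$.
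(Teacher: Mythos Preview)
Your proof is correct and essentially identical to the paper's own argument: both invoke Lemma~\ref{lemma:taucoverwithballs} with the parameters $\sqrt{2}\epsilon$ and $\sqrt{\beta}$, square the per-point bound using $\max\{a,b\}^2 \le a^2+b^2$, and then substitute $|P_\ell|R_\ell^2 = \mu_{P_\ell}(T_\ell) \le \beta\,\mu_{P_\ell}(\opt_{\inst_\ell})$ to conclude. Your explicit commentary on why the $\sqrt{2}$ and $\sqrt{\beta}$ are chosen as they are is a nice addition but does not change the route.
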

\begin{proof}
Fix a value of $\ell$. Let $\tau_\ell$ be the map between the points in $C_{w,\ell}$ and the points in $P_\ell$ of \autoref{lemma:taucoverwithballs}. The set $C_{w,\ell}$ is weighted according to $\tau_\ell$. Also, it holds that:
\begin{align*}
    \sum_{x \in P_\ell}d(x,\tau_\ell(x))^2 \leq \frac{\epsilon^2}{2\beta} \sum_{x \in P_\ell}\left[R_\ell^2+d(x,T_\ell)^2 \right] \leq  \frac{\epsilon^2}{2\beta}\left[ R_\ell^2\cdot|P_\ell|+\mu_{P_\ell}(T_\ell) \right] \leq \epsilon^2 \cdot \mu_{P_\ell}(\opt_{\inst_\ell})
\end{align*}
\end{proof}

Next, in the following two lemmas, we characterize the properties and the
size of $E_w$.

\begin{lemma} 
\label{lemma:centroidkmeans}
Let $\inst = (P,k)$ be a $k$-means instance and assume that $\epsilon$
is a positive value such that $\epsilon+\epsilon^2 \leq 1/8$.  Then, the set
$E_w$ returned by the above MapReduce algorithm is both a
$4\epsilon^2$-bounded coreset and a $27\epsilon$-centroid set of
$\inst$.
\end{lemma}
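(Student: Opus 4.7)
The proof will mirror the structure of Lemma~\ref{lemma:centroidkmedian}, but using Proposition~\ref{proposition:squareddistance} in place of the triangle inequality whenever we need to split a squared distance. The condition $\epsilon+\epsilon^2\le 1/8$ will be invoked only at the end, to collapse accumulated higher-order terms into the stated constants.

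\textbf{Step 1 (Properties of the intermediate set $C_w$).} The first-round steps coincide with those of Subsection~\ref{subsection:coresetkmedian} (modulo squared distances and the rescaled parameters), so by Lemma~\ref{lemma:cwboundedcoresetkmeans} each $C_{w,\ell}$ is an $\epsilon^{2}$-bounded coreset of $\inst_\ell$. Applying Lemma~\ref{lemma:unionbounded} to $k$-means yields that $C_w=\bigcup_\ell C_{w,\ell}$ is a $4\epsilon^{2}$-bounded coreset of $\inst$, and Lemma~\ref{lemma:boundedtostrongkmeans} then gives that $C_w$ is a $(4\epsilon^{2}+4\epsilon)$-approximate coreset. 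Denoting by $\tau: P\to C_w$ the associated map, we have $\sum_{x\in P}d(x,\tau(x))^{2}\le 4\epsilon^{2}\mu_P(\opt_\inst)$ and $\mu_P(C_w)\le 4\epsilon^{2}\mu_P(\opt_\inst)$.

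\textbf{Step 2 (Bounded coreset property of $E_w$).} For the second round, let $\phi_\ell:P_\ell\to E_{w,\ell}$ be the map given by Lemma~\ref{lemma:taucoverwithballs} applied to $\texttt{CoverWithBalls}(P_\ell,C_w,R,\sqrt{2}\epsilon,\sqrt{\beta})$; gluing them gives $\phi:P\to E_w$. Squaring the bound of Lemma~\ref{lemma:taucoverwithballs} yields $d(x,\phi(x))^{2}\le \frac{\epsilon^{2}}{2\beta}\bigl(R^{2}+d(x,C_w)^{2}\bigr)$. Summing and using $|P|R^{2}=\sum_\ell|P_\ell|R_\ell^{2}=\sum_\ell\mu_{P_\ell}(T_\ell)\le \beta\sum_\ell\mu_{P_\ell}(\opt_{\inst_\ell})\le 4\beta\,\mu_P(\opt_\inst)$ (via Lemma~\ref{lemma:optimalsolrelation}(b)) together with $\mu_P(C_w)\le 4\epsilon^{2}\mu_P(\opt_\inst)$ gives
\[
\sum_{x\in P}d(x,\phi(x))^{2}\le \frac{\epsilon^{2}}{2\beta}\bigl(4\beta+4\epsilon^{2}\bigr)\mu_P(\opt_\inst)=(2\epsilon^{2}+2\epsilon^{4}/\beta)\mu_P(\opt_\inst)\le 4\epsilon^{2}\mu_P(\opt_\inst),
\]
proving $E_w$ is a $4\epsilon^{2}$-bounded coreset of $\inst$; Lemma~\ref{lemma:boundedtostrongkmeans} then upgrades it to a $(4\epsilon^{2}+4\epsilon)$-approximate coreset.

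\textbf{Step 3 (Centroid set property).} Set $X=\{x^{E_w}:x\in\opt_\inst\}$, $|X|\le k$. I would apply Proposition~\ref{proposition:squareddistance} twice. First, with $c=\epsilon$, to split
\[
d(x,X)^{2}\le (1+1/\epsilon)d(x,\tau(x))^{2}+(1+\epsilon)d(\tau(x),X)^{2},
\]
so that the first term sums to at most $(1+1/\epsilon)\cdot 4\epsilon^{2}\mu_P(\opt_\inst)=O(\epsilon)\mu_P(\opt_\inst)$. For the second term, rewrite $\sum_{x\in P}d(\tau(x),X)^{2}=\sum_{y\in C_w}w(y)d(y,X)^{2}$ and apply Proposition~\ref{proposition:squareddistance} again with $c'=1/\epsilon$:
\[
d(y,X)^{2}\le (1+\epsilon)d(y,y^{\opt_\inst})^{2}+(1+1/\epsilon)d(y^{\opt_\inst},X)^{2}.
\]
The first piece sums to $(1+\epsilon)\mu_{C_w}(\opt_\inst)\le(1+\epsilon)(1+4\epsilon^{2}+4\epsilon)\mu_P(\opt_\inst)$. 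For the second piece, bound $d(y^{\opt_\inst},X)\le d(y^{\opt_\inst},\phi(y^{\opt_\inst}))$ and use Lemma~\ref{lemma:taucoverwithballs} and $d(y^{\opt_\inst},C_w)\le d(y,\opt_\inst)$ to get
\[
\sum_{y\in C_w}w(y)d(y^{\opt_\inst},X)^{2}\le \frac{\epsilon^{2}}{2\beta}\bigl(|P|R^{2}+\mu_{C_w}(\opt_\inst)\bigr)\le \frac{\epsilon^{2}}{2\beta}\bigl(4\beta+1+4\epsilon^{2}+4\epsilon\bigr)\mu_P(\opt_\inst),
\]
which, multiplied by $(1+1/\epsilon)$, contributes $O(\epsilon)\mu_P(\opt_\inst)$. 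Summing everything and carefully collecting constants under the hypothesis $\epsilon+\epsilon^{2}\le 1/8$ (used to absorb the cross-terms of the form $\epsilon^{2}\cdot(1/\epsilon)$) yields $\mu_P(X)\le(1+27\epsilon)\mu_P(\opt_\inst)$, giving the $27\epsilon$-centroid property.

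\textbf{Main obstacle.} The delicate point is the choice of the parameters $c,c'$ in the two applications of Proposition~\ref{proposition:squareddistance}: the quantity $\sum d(x,\tau(x))^{2}$ is only $O(\epsilon^{2})\mu_P(\opt_\inst)$, so we can afford a blow-up factor of $1/\epsilon$ in front of it, whereas the term involving $\mu_{C_w}(\opt_\inst)$ is of order $(1+O(\epsilon))\mu_P(\opt_\inst)$ and must be hit only by a factor $(1+O(\epsilon))$. Orchestrating the two splits so that every error term ends up being $O(\epsilon)\mu_P(\opt_\inst)$, while showing that the constant inflation (bounded by the hypothesis $\epsilon+\epsilon^{2}\le 1/8$) fits within $27\epsilon$, is where the real accounting work lies.
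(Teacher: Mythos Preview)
Your proposal is correct and follows essentially the same strategy as the paper, with one cosmetic difference in Step~3. The paper passes from $\mu_P(X)$ to $\mu_{C_w}(X)$ in one shot via the approximate-coreset inequality $\mu_P(X)\le (1-\gamma)^{-1}\mu_{C_w}(X)$ (with $\gamma=4\epsilon+4\epsilon^2\le 1/2$), and then applies Proposition~\ref{proposition:squareddistance} \emph{once} inside $C_w$ to split through $\opt_\inst$; you instead apply Proposition~\ref{proposition:squareddistance} \emph{twice}, first splitting $d(x,X)^2$ through $\tau(x)$ and then splitting $d(y,X)^2$ through $y^{\opt_\inst}$. Both routes land on the same key estimate $\sum_{y\in C_w}w(y)\,d(y^{\opt_\inst},X)^2\le \tfrac{\epsilon^2}{2\beta}\bigl(|P|R^2+\mu_{C_w}(\opt_\inst)\bigr)$, and the remaining bookkeeping is identical. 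Your version avoids the $1/(1-\gamma)$ factor (and hence the need to bound it by $1+2\gamma$), at the price of one extra $(1+\epsilon)$ factor from the outer split; the resulting constant actually comes out slightly smaller than $27$, so the claimed bound holds with room to spare. Note only that in your write-up the ``first piece'' of the second split carries the outer $(1+\epsilon)$ as well, so it is $(1+\epsilon)^2\mu_{C_w}(\opt_\inst)$ rather than $(1+\epsilon)\mu_{C_w}(\opt_\inst)$ --- a harmless omission for the $O(\epsilon)$ accounting.
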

\begin{proof}
 Let $\phi_\ell$ be the map of
  \autoref{lemma:taucoverwithballs} from the points in $P_\ell$ to the
  points in $E_{w,\ell}$. We have that $\sum_{x \in P_\ell}d(x,
  \phi_\ell(x))^2 \leq \epsilon^2/(2\beta)\left(|P_\ell| \cdot R_\ell^2
    + \mu_{P_\ell}(C_{w})\right)$. For any $x \in P$, let $\hat{\ell}$ be the index for which $x \in P_{\hat{\ell}}$, we define $\phi(x) = \phi_{\hat{\ell}}(x)$. We have that: 
    \begin{align*}
    \sum_{x \in P}d(x,\phi(x))^2 \leq \frac{\epsilon^2}{2\beta} \sum_{\ell = 1}^{L} \left[R^2 |P_\ell| + \mu_{P_\ell}(C_{w}) \right] = \frac{\epsilon^2}{2\beta}\left(\left( \sum_{\ell=1}^{L} |P_\ell|\cdot R_{\ell}^2\right) + \mu_{P}(C_w) \right)
    \end{align*}
  Using the fact that
  $|P_\ell|\cdot R_\ell^2 = \mu_{P_\ell}(T_\ell) \leq \beta \cdot
  \mu_{P_\ell}(\opt_{\inst_\ell}) \leq 4\beta \cdot
  \mu_{P_\ell}(\opt_{\inst})$, where the last inequality is due to \autoref{lemma:optimalsolrelation}, we have that $\sum_{\ell} R_\ell^2 |P_\ell| \leq \sum_{\ell} 4\beta \cdot
  \mu_{P_\ell}(\opt_\inst) \leq 4\beta \cdot \mu_{P}(\opt_\inst)$. Also, by \autoref{lemma:cwboundedcoresetkmeans} and \autoref{lemma:unionbounded}, $C_{w}$ is an
  $4\epsilon^2$-bounded coreset of $P$, thus
  $\mu_{P}(C_{w}) \leq 4\epsilon^2 \cdot
  \mu_{P}(\opt_{\inst})$.
  Therefore,  $E_{w}$ is
  an $4\epsilon^2$-bounded coreset of $\inst$.
  
  We now show that $E_w$ is a centroid set of $\inst$. Let $X = \{
  x^{E_w} : x \in \opt_\inst \}$. By \autoref{lemma:boundedtostrongkmeans}, $C_w$ is a
  $\gamma$-approximate coreset of $\inst$, with $\gamma = 4(\epsilon +
  \epsilon^2) \leq 1/2$. Hence, $\mu_{P}(X) \leq 1/(1-\gamma)\cdot
  \mu_{C_w}(X)$. By \autoref{proposition:squareddistance}, we have:
  \begin{align*}
      \mu_{C_w}(X) = \sum_{x \in C_w}w(x)d(x,X)^2 \leq (1+\epsilon)\mu_{C_w}(\opt_\inst) + (1+1/\epsilon)\sum_{x \in C_w}w(x)d(x^{\opt_\inst},X)^2
  \end{align*}
  Since $C_w$ is a $\gamma$-approximate coreset, it holds that
  $\mu_{C_w}(\opt_\inst) \leq (1+\gamma)\mu_{P}(\opt_\inst)$. By
  reasoning 
  as in the proof of \autoref{lemma:centroidkmedian}, we have that
  $\sum_{x \in C_w}w(x)d(x^{\opt_\inst},X)^2 \leq
  (5\epsilon^2/2 + \gamma\epsilon^2/2)\mu_{P}(\opt_\inst)$. Putting it
  all together, we conclude:
  \begin{align*}
  \mu_{P}(X)/\mu_{P}(\opt_\inst) \leq \left(1+\gamma+5\epsilon^2/2 + \gamma\epsilon^2/2 + 7\epsilon/2+3\gamma\epsilon/2\right)/(1-\gamma). 
  \end{align*}
  Since $\gamma \leq 1/2$, we have that $1/(1-\gamma) \leq 1 + 2\gamma$. By using the constraint on $\epsilon$ and the definition of $\gamma$, after some tedious computations, we obtain $\mu_{P}(X)/\mu_{P}(\opt_\inst) \leq 1+27\epsilon$.
  \end{proof}

\begin{lemma}
\label{lemma:kmeanssize}
Let $\inst = (P,k)$ be a $k$-means instance. Suppose that the points
in $P$ belong to a metric space with doubling dimension $D$. Let $E_w$
be the set returned by the above MapReduce algorithm with input
$\inst$ and $m\geq k$. Then, $|E_w| = O\left( L^2 \cdot m \cdot
(8\sqrt{2\beta}/\epsilon)^{2D} \log^2{|P|} \right)$
\end{lemma}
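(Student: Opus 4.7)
The plan is to mirror the proof of Lemma \ref{lemma:kmediansize} for $k$-median, invoking \autoref{theorem:sizecoverwithballs} twice: once to bound each $|C_{w,\ell}|$ (and so $|C_w|$), and once to bound each $|E_{w,\ell}|$ (and so $|E_w|$). The only substantive adaptation is the passage to squared distances, which requires taking square roots at the right places so that the ``$c$ such that $c\cdot R \geq d(x,\cdot)$'' needed by \autoref{theorem:sizecoverwithballs} comes out polynomial in $|P|$.

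For the first application, I would fix $\ell$ and observe that since $R_\ell^2 = \mu_{P_\ell}(T_\ell)/|P_\ell|$, for any $x \in P_\ell$ we have $d(x,T_\ell)^2 \leq \mu_{P_\ell}(T_\ell) = |P_\ell|\cdot R_\ell^2$, hence $d(x,T_\ell) \leq \sqrt{|P_\ell|}\cdot R_\ell \leq \sqrt{|P|}\cdot R_\ell$. Thus I may apply \autoref{theorem:sizecoverwithballs} to the call $\texttt{CoverWithBalls}(P_\ell,T_\ell,R_\ell,\sqrt{2}\epsilon,\sqrt{\beta})$ with $c = \sqrt{|P|}$, getting a factor $(16\sqrt{\beta}/(\sqrt{2}\epsilon))^D = (8\sqrt{2\beta}/\epsilon)^D$ and a $\log_2 c + 2 = O(\log |P|)$ factor, so $|C_{w,\ell}| = O\bigl(m\cdot(8\sqrt{2\beta}/\epsilon)^{D}\log |P|\bigr)$ and consequently $|C_w| = O\bigl(L\cdot m\cdot(8\sqrt{2\beta}/\epsilon)^{D}\log |P|\bigr)$.

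For the second application I need to bound $d(x,C_w)$ in terms of $R$ for every $x \in P$. The key estimate is obtained by using \autoref{lemma:cwboundedcoresetkmeans} together with $R_\ell^2|P_\ell| = \mu_{P_\ell}(T_\ell)$: revisiting the bound in the proof of \autoref{lemma:cwboundedcoresetkmeans} shows $\mu_{P_\ell}(C_{w,\ell}) \leq (\epsilon^2/\beta)\mu_{P_\ell}(T_\ell) \leq \epsilon^2\mu_{P_\ell}(T_\ell)$. Summing over $\ell$ and using the definition $|P|\cdot R^2 = \sum_\ell |P_\ell|\cdot R_\ell^2 = \sum_\ell \mu_{P_\ell}(T_\ell)$, I obtain $\sum_\ell \mu_{P_\ell}(C_{w,\ell}) \leq \epsilon^2 |P| R^2$. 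For any $x \in P$, say $x \in P_\ell$, it follows that $d(x,C_w)^2 \leq d(x,C_{w,\ell})^2 \leq \mu_{P_\ell}(C_{w,\ell}) \leq \epsilon^2 |P| R^2$, hence $d(x,C_w) \leq \epsilon\sqrt{|P|}\cdot R \leq \sqrt{|P|}\cdot R$. I may thus apply \autoref{theorem:sizecoverwithballs} to each call $\texttt{CoverWithBalls}(P_\ell, C_w, R, \sqrt{2}\epsilon,\sqrt{\beta})$ with $c = \sqrt{|P|}$, yielding $|E_{w,\ell}| = O\bigl(|C_w|\cdot(8\sqrt{2\beta}/\epsilon)^{D}\log |P|\bigr)$. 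Summing over $\ell$ and inserting the previously derived bound on $|C_w|$ gives exactly $|E_w| = O\bigl(L^2\cdot m\cdot(8\sqrt{2\beta}/\epsilon)^{2D}\log^2 |P|\bigr)$.

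The only real obstacle is making sure the bound $\mu_{P_\ell}(C_{w,\ell}) \leq \epsilon^2 \mu_{P_\ell}(T_\ell)$ is extracted cleanly (so that it combines with the definition of $R$ without picking up extra $\beta$ factors that would spoil the final exponent); everything else is the natural squared-distance transcription of the $k$-median argument, and the extra $\log|P|$ absorbed in $\log_2 c + 2$ is harmless.
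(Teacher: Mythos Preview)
Your proposal is correct and follows essentially the same two-stage application of \autoref{theorem:sizecoverwithballs} as the paper's own proof. The only cosmetic difference is in how you justify $\sum_\ell \mu_{P_\ell}(C_{w,\ell}) \leq \epsilon^2 |P| R^2$: you re-open the computation inside \autoref{lemma:cwboundedcoresetkmeans} to get $\mu_{P_\ell}(C_{w,\ell}) \leq (\epsilon^2/\beta)\mu_{P_\ell}(T_\ell)$ directly, whereas the paper quotes the lemma's statement $\mu_{P_\ell}(C_{w,\ell}) \leq \epsilon^2 \mu_{P_\ell}(\opt_{\inst_\ell})$ and then uses $\mu_{P_\ell}(T_\ell) \geq \mu_{P_\ell}(\opt_{\inst_\ell})$---your route is arguably cleaner since it does not rely on that last inequality (which is not obvious when $m>k$).
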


\begin{proof}
For any $\ell=1,\ldots,L$ and $x \in P_\ell$, it holds
that $R_{\ell}\cdot \sqrt{|P_\ell|} = \sqrt{\mu_{P_\ell}(T_\ell)} \geq
d(x,T_\ell)$. By using \autoref{theorem:sizecoverwithballs}, we obtain
that $|C_{w,\ell}| = O\left(m \cdot (8\sqrt{2\beta}/\epsilon)^{D}
\log{|P|} \right)$, and we can bound the size of $C_w$ with an union bound. By \autoref{lemma:cwboundedcoresetkmeans}, $C_{w,\ell}$ is a $\epsilon^2$-bounded coreset of $\inst_\ell$, hence $\mu_{P_\ell}(C_{w,\ell}) \leq \epsilon^2 \mu_{P_\ell}(\opt_{\inst_\ell})$. For any $x \in P$ we have that
$\epsilon\sqrt{|P|}\cdot R = \sqrt{\epsilon^2 \sum_{\ell} |P_\ell| R_\ell^2 } = \sqrt{\epsilon^2 \sum_{\ell}
  \mu_{P_\ell}(T_\ell)} \geq \sqrt{\epsilon^2 \sum_{\ell}
  \mu_{P_\ell}(\opt_{\inst_\ell})} \geq 
\sqrt{\sum_\ell \mu_{P_\ell}(C_{w,\ell})} \geq \sqrt{\mu_{P}(C_w)}  \geq d(x,C_{w})$. Thus, 
the lemma follows by applying \autoref{theorem:sizecoverwithballs} to
bound the sizes of the sets $E_{w,\ell}$.
\end{proof}

We are now ready to state the main result of this subsection.
\begin{theorem}
\label{theorem:kmeanscoresetfactor}
Let $\inst = (P,k)$ be a $k$-means instance and let $E_w$ be the set
returned by the above MapReduce algorithm for a fixed positive
$\epsilon$ such that $\epsilon + \epsilon^2 \leq 1/8$.  Let
$\mathcal{A}$ be an $\alpha$-approximation algorithm for the $k$-means
problem, with constant $\alpha$. If $S$ is the solution returned by $\mathcal{A}$ with input
$\inst' = (E_w,k)$, then $\mu_{P}(S)/\mu_{P}(\opt_\inst) \leq \alpha +
O(\epsilon)$.
\end{theorem}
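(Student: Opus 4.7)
The plan is to mimic the structure of the proof of \autoref{theorem:kmediancoresetfactor}, with two key adaptations needed to handle the fact that squared distances do not obey the triangle inequality. First I would invoke \autoref{lemma:centroidkmeans} to obtain that $E_w$ is both a $4\epsilon^2$-bounded coreset and a $27\epsilon$-centroid set of $\inst$, and I would let $\tau: P \to E_w$ be the map witnessing the bounded-coreset property, so that $\sum_{x \in P} d(x,\tau(x))^2 \leq 4\epsilon^2 \cdot \mu_P(\opt_\inst)$. Second, I would apply \autoref{lemma:boundedtostrongkmeans} to conclude that $E_w$ is a $(4\epsilon^2 + 2 \cdot 2\epsilon) = (4\epsilon + 4\epsilon^2)$-approximate coreset of $\inst$; this is the replacement for the elementary triangle inequality used in the $k$-median argument.

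Next I would bound $\mu_P(S)$ by splitting the contribution of each $x \in P$ via \autoref{proposition:squareddistance}, using a small constant $c>0$ to be fixed at the end:
\begin{align*}
\mu_P(S) \;=\; \sum_{x \in P} d(x,S)^2 \;\leq\; (1+1/c)\sum_{x \in P} d(x,\tau(x))^2 + (1+c)\sum_{x \in P} d(\tau(x),S)^2.
\end{align*}
The first sum is at most $4\epsilon^2 \mu_P(\opt_\inst)$ by the bounded coreset property, and the second equals $(1+c)\,\mu_{E_w}(S)$ by the definition of the weights on $E_w$ (each $y \in E_w$ has weight equal to $|\tau^{-1}(y)|$).

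Now I would control $\mu_{E_w}(S)$. Since $\mathcal{A}$ is an $\alpha$-approximation, $\mu_{E_w}(S) \leq \alpha \cdot \mu_{E_w}(\opt_{\inst'})$. Because $E_w$ is a $27\epsilon$-centroid set, there exists $X \subseteq E_w$ with $|X| \leq k$ such that $\mu_P(X) \leq (1+27\epsilon) \mu_P(\opt_\inst)$, so by optimality of $\opt_{\inst'}$ and the approximate-coreset bound from \autoref{lemma:boundedtostrongkmeans},
\begin{align*}
\mu_{E_w}(\opt_{\inst'}) \;\leq\; \mu_{E_w}(X) \;\leq\; (1+4\epsilon+4\epsilon^2)\,\mu_P(X) \;\leq\; (1+4\epsilon+4\epsilon^2)(1+27\epsilon)\,\mu_P(\opt_\inst).
\end{align*}
Combining all the above gives
\begin{align*}
\frac{\mu_P(S)}{\mu_P(\opt_\inst)} \;\leq\; (1+1/c)\cdot 4\epsilon^2 \;+\; (1+c)\,\alpha\,(1+4\epsilon+4\epsilon^2)(1+27\epsilon).
\end{align*}
Finally I would set $c = \epsilon$, which makes $(1+1/c)\cdot 4\epsilon^2 = 4\epsilon + 4\epsilon^2 = O(\epsilon)$ and $(1+c) = 1 + O(\epsilon)$, so that the right-hand side expands to $\alpha + O(\epsilon)$, as desired.

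The main obstacle is essentially bookkeeping: unlike the $k$-median case, every use of the triangle inequality costs a multiplicative $(1+c)$ (or an extra bounded-to-approximate penalty of $2\sqrt{\epsilon}$), and the $\alpha$ factor sits in front of the largest product of such penalties, so one must verify that all of these multiplicative blowups collapse to $\alpha + O(\epsilon)$ under the hypothesis $\epsilon + \epsilon^2 \leq 1/8$. The choice $c = \epsilon$ is the natural one that balances the $1/c$ factor in front of the bounded-coreset error against the $O(\epsilon)$ slack we can afford from the $(1+c)$ inflation of the dominant term.
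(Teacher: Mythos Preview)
Your proof is correct and follows essentially the same approach as the paper. The only minor difference is in how you pass from $\mu_P(S)$ to $\mu_{E_w}(S)$: the paper uses the approximate-coreset property directly (obtaining $\mu_P(S) \leq \mu_{E_w}(S)/(1-4\epsilon-4\epsilon^2)$ and then bounding $1/(1-4\epsilon-4\epsilon^2) \leq 1+8\epsilon+8\epsilon^2$), whereas you re-derive an equivalent bound via \autoref{proposition:squareddistance} applied to the map $\tau$ with $c=\epsilon$; the remaining chain through $\mu_{E_w}(\opt_{\inst'}) \leq \mu_{E_w}(X) \leq (1+4\epsilon+4\epsilon^2)(1+27\epsilon)\mu_P(\opt_\inst)$ is identical.
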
 
\begin{proof}
By \autoref{lemma:centroidkmeans} and \autoref{lemma:boundedtostrongkmeans}, $E_w$ is a $(4\epsilon^2+4\epsilon)$-approximate coreset of $\inst$. Therefore, $\mu_{P}(S) \leq (1/(1-4\epsilon-4\epsilon^2)) \cdot \mu_{E_w}(S)$. Since $\mathcal{A}$ is an $\alpha$-approximation algorithm, $\mu_{E_w}(S) \leq \alpha \cdot \mu_{E_w}(\opt_{\inst'})$. Also, $E_w$ is a $27\epsilon$-centroid set, thus there exists a solution $X \subseteq E_w$ such that $\mu_{P}(X) \leq (1+27\epsilon)\cdot \mu_{P}(\opt_\inst)$. We have that $\mu_{E_w}(\opt_{\inst'}) \leq \mu_{E_w}(X) \leq (1+4\epsilon+4\epsilon^2) \cdot \mu_P(X) \leq (1+4\epsilon+4\epsilon^2)(1+27\epsilon)\cdot \mu_P(\opt_\inst)$, where the second 
inequality follows again
from the fact that $E_w$ is a $(4\epsilon^2+4\epsilon)$-approximate coreset of $\inst$. Because of the constraints on $\epsilon$, we have that $1/(1-4\epsilon-4\epsilon^2) \leq 1+8\epsilon+8\epsilon^2$. Therefore, it finally results that $\mu_{P}(S)/\mu_{P}(\opt_\inst) \leq \alpha \cdot (1+8\epsilon+8\epsilon^2)(1+4\epsilon+4\epsilon^2)(1+27\epsilon) = \alpha + O(\epsilon)$.
\end{proof}

As noted in Subsection~\ref{subsection:approachkmedian}, a simpler version of this algorithm can be employed if we restrict our attention to the continuous case. Indeed, if we limit the algorithm to the first round and output the set $C_w = \cup_{\ell}C_{w,\ell}$, it is easy to show that an $\alpha$-approximate algorithm executed on the coreset $C_w$ returns a $(\alpha+O(\epsilon))$-approximate solution.

\subsection{MapReduce algorithms for $k$-median and $k$-means}
\label{subsection:mapreducefinal}

Let $\inst = (P,k)$ be a $k$-median (resp.,
$k$-means) instance. We can compute an approximate solution of $\inst$
in three MapReduce rounds: in the first two rounds, a weighted coreset $E_w$
is computed using the algorithm described in
Subsection~\ref{subsection:coresetkmedian} (resp.,
Subsection~\ref{subsection:coresetkmeans}), while in the third round
the final solution is computed by running a sequential approximation
algorithm for the weighted variant of the problem on $E_w$. 
Suppose that in the first of the two rounds of coreset construction we use
a linear-space algorithm to compute the sets $T_\ell$ 
of size $m = O(k)$,
and cost at most a factor $\beta$ times the optimal cost,
and that in the third round we run a linear-space $\alpha$-approximation
algorithm on $E_w$, with constant $\alpha$. Setting $L = \sqrt[\leftroot{-2}\uproot{2}3]{|P|/k}$ we obtain the following theorem
as an immediate consequence of 
Lemmas~\ref{lemma:kmediansize} and~\ref{lemma:kmeanssize}, 
and Theorems~\ref{theorem:kmediancoresetfactor} and~\ref{theorem:kmeanscoresetfactor}.
\begin{theorem}
Let $\inst = (P,k)$ be an instance of $k$-median $($resp.,
$k$-means$)$.  Suppose that the points in $P$ belong to a metric space
with doubling dimension $D$. For any $\epsilon \in (0,1)$
$($with $\epsilon+\epsilon^2 \leq 1/8$ for $k$-means$)$ 
the 3-round MapReduce algorithm
described above computes an $(\alpha+O(\epsilon))$-approximate
solution of $\inst$ using local space
$O\left(\ |P|^{2/3}k^{1/3} (16\beta/\epsilon)^{2D} \log^2{|P|}  \right)$ 
$($resp., $O\left(|P|^{2/3}k^{1/3} (8\sqrt{2\beta}/\epsilon)^{2D} \log^2{|P|}  \right)$$)$.
\end{theorem}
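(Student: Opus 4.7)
The plan is to derive the theorem by combining the approximation guarantees of Theorems~\ref{theorem:kmediancoresetfactor} and~\ref{theorem:kmeanscoresetfactor} with the coreset-size bounds of Lemmas~\ref{lemma:kmediansize} and~\ref{lemma:kmeanssize}, and then balancing the two sources of local memory consumption via an appropriate choice of $L$. First I would argue the approximation quality: the third round runs a linear-space $\alpha$-approximation algorithm on the weighted instance $\inst'=(E_w,k)$ produced by the two-round coreset construction of Subsection~\ref{subsection:coresetkmedian} (resp., Subsection~\ref{subsection:coresetkmeans}). Theorem~\ref{theorem:kmediancoresetfactor} (resp.~\ref{theorem:kmeanscoresetfactor}) directly yields $\nu_P(S)/\nu_P(\opt_\inst)\leq\alpha+O(\epsilon)$ (resp.\ the analogous bound on $\mu_P(S)/\mu_P(\opt_\inst)$, under the stated constraint $\epsilon+\epsilon^2\leq 1/8$).

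Next I would bound the sizes and memory used in each round. In the first round, each reducer computes $T_\ell$ via a linear-space constant-approximation algorithm (e.g.~\cite{AryaGKMMP04} for $k$-median, \cite{Gupta2005,Kanungo2002} for $k$-means) with $m=O(k)$ and $\beta=O(1)$, then runs \texttt{CoverWithBalls} on its $|P|/L$ input points, so its local memory is $O(|P|/L)$. In the second round, each reducer receives its partition plus a copy of $C_w$ of size $O(L m (16\beta/\epsilon)^D\log|P|)$ (from Subsection~\ref{subsection:approachkmedian}), again calling \texttt{CoverWithBalls}. In the third round, a single reducer holds the entire $E_w$, whose size by Lemma~\ref{lemma:kmediansize} (resp., Lemma~\ref{lemma:kmeanssize}) is $O(L^2 k (16\beta/\epsilon)^{2D}\log^2|P|)$ (resp.\ $O(L^2 k (8\sqrt{2\beta}/\epsilon)^{2D}\log^2|P|)$), and runs a linear-space sequential $\alpha$-approximation on the weighted coreset.

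Finally I would pick $L$ so as to minimize the maximum of the two dominant terms, $|P|/L$ (first and second rounds) and $L^2 k$ times the doubling-dimension factor and $\log^2|P|$ (third round). Setting $L=(|P|/k)^{1/3}$ makes both equal to $|P|^{2/3}k^{1/3}$, which after multiplication by the common factor $(16\beta/\epsilon)^{2D}\log^2|P|$ (resp.\ $(8\sqrt{2\beta}/\epsilon)^{2D}\log^2|P|$) yields the stated local-memory bound. The intermediate $|C_w|$ is dominated by $|E_w|$ and therefore does not affect the bound.

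The main point to verify, rather than a substantive obstacle, is that the balancing is indeed the correct one and that none of the other quantities (the sizes of $T_\ell$, $C_{w,\ell}$, or the auxiliary scalars $R_\ell$ broadcast to round-two reducers) dominate. Since $m=O(k)$ and the $R_\ell$ constitute $O(L)=O(|P|^{1/3}k^{-1/3})$ real numbers, both are absorbed into the $|P|^{2/3}k^{1/3}$ term. Hence the three-round MapReduce schedule meets the claimed local-memory bound while delivering the $(\alpha+O(\epsilon))$-approximation, completing the proof.
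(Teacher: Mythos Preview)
Your proposal is correct and follows exactly the paper's approach: invoke Theorems~\ref{theorem:kmediancoresetfactor}/\ref{theorem:kmeanscoresetfactor} for the approximation guarantee, Lemmas~\ref{lemma:kmediansize}/\ref{lemma:kmeanssize} for the coreset size, and balance $|P|/L$ against $L^2 k$ by choosing $L=(|P|/k)^{1/3}$. In fact you spell out more detail than the paper, which simply states the theorem as an ``immediate consequence'' of those four results after fixing $L$.
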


Note that for a wide range of the relevant parameters, 
the local space of the MapReduce algorithms is
substantially sublinear in the input size, and it is easy to show that the aggregate space is linear in $|P|$. 
As concrete instantiations of the above result, 
both the $T_{\ell}$'s and the final solution may be obtained through
the sequential algorithms in \cite{AryaGKMMP04} for
$k$-median, and in \cite{Gupta2005} for $k$-means. Both algorithms are
based on local search and feature approximations $\alpha = 3+2/t$
for $k$-median, and $\alpha = 5+4/t$ for $k$-means, where $t$ is the
number of simultaneous swaps allowed. With this choice, the result of
the above theorem holds with $\beta = \alpha = O(1)$. Alternatively,
for the  $T_{\ell}$'s we could use $k$-means++
\cite{BahmaniMVKV12}
as a bi-criteria approximation algorithm (e.g, see \cite{Wei16}),
which yields a smaller $\beta$, at the expense of a slight, yet
constant, increase in the size $m$ of the $T_\ell$'s. For larger $D$,
this might be a better choice as the coreset size (hence the local
memory) is linear in $m$ and $\beta^{2D}$ (resp., $\beta^D$).  Moreover, bi-criteria
approximations are usually faster to compute than actual solutions.

\section{Conclusions} \label{sec:conclusions}
We presented distributed coreset constructions that can
be used in conjunction with sequential approximation algorithms for
$k$-median and $k$-means in general metric spaces to obtain the first
space-efficient, 3-round MapReduce algorithms for the two problems,
which are almost as accurate as their sequential counterparts. The
constructions for the two problems are based on a uniform strategy,
and crucially leverage the properties of spaces of bounded doubling
dimension, specifically those related to ball coverings of sets of
points. One attractive feature of our constructions is their
simplicity, which makes them amenable to fast practical
implementations.

\bibliographystyle{plainurl}
\bibliography{references}

\begin{thebibliography}{10}

\bibitem{ArthurV07}
D.~Arthur and S.~Vassilvitskii.
\newblock k-means++: the advantages of careful seeding.
\newblock In {\em Proc. 18th {ACM-SIAM SODA}}, pages 1027--1035, 2007.

\bibitem{AryaGKMMP04}
V.~Arya, N.~Garg, R.~Khandekar, A.~Meyerson, K.~Munagala, and V.~Pandit.
\newblock Local search heuristics for k-median and facility location problems.
\newblock {\em {SIAM} J. Comput.}, 33(3):544--562, 2004.

\bibitem{AwasthiB15}
P.~Awasthi and M.F. Balcan.
\newblock {Center based clustering: A foundational perspective}.
\newblock In {\em Handbook of cluster analysis}. CRC Press, 2015.

\bibitem{BachemLK18}
O.~Bachem, M.~Lucic, and A.~Krause.
\newblock Scalable k -means clustering via lightweight coresets.
\newblock In {\em Proc. 24th {ACM} {KDD}}, pages 1119--1127, 2018.

\bibitem{BahmaniMVKV12}
B.~Bahmani, B.~Moseley, A.~Vattani, R.~Kumar, and S.~Vassilvitskii.
\newblock Scalable k-means++.
\newblock {\em {PVLDB}}, 5(7):622--633, 2012.

\bibitem{BalcanEL13}
M.F. Balcan, S.~Ehrlich, and Y.~Liang.
\newblock Distributed k-means and k-median clustering on general communication
  topologies.
\newblock In {\em Proc. 27th NIPS}, pages 1995--2003, 2013.

\bibitem{CeccarelloPP19}
M.~Ceccarello, A.~Pietracaprina, and G.~Pucci.
\newblock Solving k-center clustering (with outliers) in mapreduce and
  streaming, almost as accurately as sequentially.
\newblock {\em {PVLDB}}, 12(7), 2019.

\bibitem{CohenCK18}
E.~Cohen, S.~Chechik, and H.~Kaplan.
\newblock Clustering small samples with quality guarantees: Adaptivity with
  one2all {PPS}.
\newblock In {\em Proc. 32nd {AAAI}}, pages 2884--2891, 2018.

\bibitem{DeanG08}
J.~Dean and S.~Ghemawat.
\newblock {MapReduce: Simplified Data Processing on Large Clusters}.
\newblock {\em {Communications of the ACM}}, 51(1):107--113, 2008.

\bibitem{EneIM11}
A.~Ene, S.~Im, and B.~Moseley.
\newblock {Fast Clustering Using MapReduce}.
\newblock In {\em Proc. 17th {ACM} {KDD}}, pages 681--689, 2011.

\bibitem{Feldman2011}
D.~Feldman and M.~Langberg.
\newblock A unified framework for approximating and clustering data.
\newblock In {\em Proc. 43rd ACM STOC}, pages 569--578, 2011.

\bibitem{Gupta2005}
A.~Gupta and K.~Tangwongsan.
\newblock Simpler analyses of local search algorithms for facility location.
\newblock {\em CoRR}, abs/0809.2554, 2008.

\bibitem{Har-Peled2005}
S.~Har-Peled and A.~Kushal.
\newblock Smaller coresets for k-median and k-means clustering.
\newblock In {\em Proc. 21st SCG}, pages 126--134, 2005.

\bibitem{Har-Peled2004}
S.~Har-Peled and S.~Mazumdar.
\newblock On coresets for k-means and k-median clustering.
\newblock In {\em Proc. 36th ACM STOC}, pages 291--300, 2004.

\bibitem{HennigMMR15}
C.~Hennig, M.~Meila, F.~Murtagh, and R.~Rocci.
\newblock {\em {Handbook of cluster analysis}}.
\newblock CRC Press, 2015.

\bibitem{Huang2018}
L.~{Huang}, S.~{Jiang}, J.~{Li}, and X.~{Wu}.
\newblock Epsilon-coresets for clustering (with outliers) in doubling metrics.
\newblock In {\em Proc. 59th IEEE FOCS}, pages 814--825, 2018.

\bibitem{IndykMMM14}
P.~Indyk, S.~Mahabadi, M.~Mahdian, and V.S. Mirrokni.
\newblock {Composable Core-sets for Diversity and Coverage Maximization}.
\newblock In {\em {Proc. 33rd ACM PODS}}, pages 100--108, 2014.

\bibitem{Kanungo2002}
T.~Kanungo, D.~M. Mount, N.~S. Netanyahu, C.~D. Piatko, R.~Silverman, and A.~Y.
  Wu.
\newblock A local search approximation algorithm for k-means clustering.
\newblock In {\em Proc. 18th SCG}, pages 10--18, 2002.

\bibitem{Kaufmann1987}
L.~Kaufmann and P.~Rousseeuw.
\newblock Clustering by means of medoids.
\newblock {\em Data Analysis based on the L1-Norm and Related Methods}, pages
  405--416, 1987.

\bibitem{LeskovecRU14}
J.~Leskovec, A.~Rajaraman, and J.D. Ullman.
\newblock {\em {Mining of Massive Datasets, 2nd Ed}}.
\newblock Cambridge University Press, 2014.

\bibitem{Philips2016}
J.~M. Phillips.
\newblock Coresets and sketches.
\newblock {\em Handbook of Discrete and Computational Geometry, 3rd Ed}, 2016.

\bibitem{PietracaprinaPRSU12}
A.~Pietracaprina, G.~Pucci, M.~Riondato, F.~Silvestri, and E.~Upfal.
\newblock {Space-Round Tradeoffs for MapReduce Computations}.
\newblock In {\em {Proc. 26th ACM ICS}}, pages 235--244, 2012.

\bibitem{SohlerW18}
C.~Sohler and D.~P. Woodruff.
\newblock Strong coresets for k-median and subspace approximation: Goodbye
  dimension.
\newblock In {\em Proc. 59th {IEEE} {FOCS}}, pages 802--813, 2018.

\bibitem{Song0H17}
H.~Song, J.G. Lee, and W.S. Han.
\newblock {PAMAE:} parallel \emph{k}-medoids clustering with high accuracy and
  efficiency.
\newblock In {\em Proc. 23rd {ACM} {KDD}}, pages 1087--1096, 2017.

\bibitem{Wei16}
D.~Wei.
\newblock A constant-factor bi-criteria approximation guarantee for k-means++.
\newblock In {\em Proc. 30th NIPS}, pages 604--612, 2016.

\end{thebibliography}

\end{document}